\documentclass{article}

\usepackage{fullpage}
\usepackage{amsthm,amssymb,amsmath}  
\usepackage{xspace,enumerate}
\usepackage[utf8]{inputenc}
\usepackage{thmtools}
\usepackage{thm-restate}
\usepackage{authblk}

  \theoremstyle{plain}
  \newtheorem{theorem}{Theorem}
  \newtheorem{lemma}{Lemma}  
  \newtheorem{corollary}[theorem]{Corollary}  
  \newtheorem{fact}{Fact}
  \newtheorem{observation}{Observation}
  \theoremstyle{definition}
  
  \newtheorem{problem}{Problem}

\title{Longest common substring with approximately $k$ mismatches\thanks{This is a full and extended version of the conference paper \cite{DBLP:conf/cpm/Starikovskaya16}.}}
\author[1]{Tomasz Kociumaka}
\author[1]{Jakub Radoszewski}
\author[2]{Tatiana Starikovskaya}

\affil[1]{Institute of Informatics, University of Warsaw, Warsaw, Poland\\
    \texttt{[kociumaka,jrad]@mimuw.edu.pl}}
\affil[2]{DI/ENS, PSL Research University, Paris, France\\
    \texttt{tat.starikovskaya@gmail.com}
}

\date{\vspace{-5ex}}

\usepackage{algorithm,algorithmicx,algpseudocode}
\usepackage{amsthm,amsfonts,amsmath}
\usepackage{microtype}
\usepackage{xspace}

\newtheorem*{hypothesis}{\bf Hypothesis}

\newcommand{\Oh}{O}
\newcommand{\eps}{\varepsilon}
\newcommand{\lcpe}{\mathrm{LCP}_{(1+\eps)k}}
\newcommand{\lcp}{\mathrm{LCP}_{\tilde{k}}}
\newcommand{\lcpk}{\mathrm{LCP}_{k}}
\newcommand{\sk}{\mathrm{sk}}
\newcommand{\Prob}{\mathrm{Pr}}
\newcommand{\LCSp}{\textsf{LCS}\xspace}
\newcommand{\kLCS}{\textsf{LCS with $k$ Mismatches}\xspace}
\newcommand{\AllkLCS}{\textsf{LCS with $k$ Mismatches for all $k$}\xspace}
\newcommand{\kApproxLCS}{\textsf{LCS with Approximately $k$ Mismatches}\xspace}
\newcommand{\ApproxLCS}{\textsf{Approximate LCS with $k$ Mismatches}\xspace}
\newcommand{\OV}{\textsf{Orthogonal Vectors}\xspace}
\newcommand{\BJI}{\textsf{Binary Jumbled Indexing}\xspace}

\newcommand{\Hashes}{\mathcal{H}}
\newcommand{\Collisions}{\mathcal{C}}
\newcommand{\Bad}{\mathcal{B}}
\newcommand{\Projections}{\Pi}
\newcommand{\Pos}{{P}}
\newcommand{\HD}{d_H}

\newcommand\restr[2]{{
  \left.\kern-\nulldelimiterspace
  #1 
  \vphantom{\big|} 
  \right|_{#2} 
}}

\begin{document}
  \maketitle

\begin{abstract}
In the longest common substring problem, we are given two strings of length $n$ and must find a substring of maximal length that occurs in both strings. It is well known that the problem can be solved in linear time, but the solution is not robust and can vary greatly when the input strings are changed even by one character. To circumvent this, Leimeister and Morgenstern introduced the problem of the longest common substring with $k$ mismatches. Lately, this problem has received a lot of attention in the literature. In this paper, we first show a conditional lower bound based on the SETH hypothesis implying that there is little hope to improve existing solutions. We then introduce a new but closely related problem of the longest common substring with approximately $k$ mismatches and use locality-sensitive hashing to show that it admits a solution with strongly subquadratic running time. We also apply these results to obtain a strongly subquadratic-time 2-approximation algorithm for the longest common substring with $k$ mismatches problem and show conditional hardness of improving its approximation ratio.
\end{abstract}

\section{Introduction}
\label{sec:intro}
Understanding how similar two strings are and what they share in common is a central task in stringology. The significance of this task is witnessed by the 50,000+ citations of the paper introducing BLAST~\cite{AGM+90}, a heuristic algorithmic tool for comparing biological sequences. This task can be formalised in many different ways, from the longest common substring problem to the edit distance problem. The longest common substring problem can be solved in optimal linear time and space, while the best known algorithms for the edit distance problem require $n^{2-o(1)}$ time, which makes the longest common substring problem an attractive choice for many practical applications. On the other hand, the longest common substring problem is not robust and its solution can vary greatly when the input strings are changed even by one character. To overcome this issue, recently a new problem has been introduced called the longest common substring with $k$ mismatches. In this paper, we continue this line of research. 

\subsection{Related work}
Let us start with a precise statement of the longest common substring problem.

\begin{problem}[\LCSp ]~\label{pr:LCS}
Given two strings $T_1, T_2$ of length $n$, find a maximum-length substring of $T_1$ that occurs in $T_2$.
\end{problem}

The suffix tree of $T_1$ and $T_2$, a data structure containing all suffixes of $T_1$ and $T_2$, allows to solve this problem in linear time and space~\cite{DBLP:conf/focs/Weiner73,DBLP:books/cu/Gusfield1997,DBLP:conf/cpm/Hui92}, which is optimal as any algorithm needs $\Omega(n)$ time to read and $\Omega(n)$ space to store the strings. However, if we only account for ``additional'' space, the space the algorithm uses apart from the space required to store the input, then the suffix tree-based solution is not optimal and has been improved in a series of publications~\cite{LCS-1,DBLP:conf/esa/KociumakaSV14,DBLP:conf/cpm/StarikovskayaV13}. 

The major disadvantage of the longest common substring problem is that its solution is not robust. Consider, for example, two pairs of strings: $a^{2m+1}, a^{2m} b$ and $a^{m} b a^{m},a^{2m} b$. The longest common substring of the first pair of strings is almost twice as long as the longest common substring of the second pair of strings, although we changed only one character. This makes the longest common substring unsuitable to be used as a measure of similarity of two strings: Intuitively, changing one character must not change the measure of similarity much. To overcome this issue, it is natural to allow the substring to occur in $T_1$ and $T_2$ not exactly but with a small number of mismatches. 

\begin{problem}[\kLCS]\label{pr:LCSk}
Given two strings $T_1, T_2$ of length $n$ and an integer $k$, find a maximum-length substring of $T_1$ that occurs in $T_2$ with at most $k$ mismatches.
\end{problem}

The problem can be solved in quadratic time and space by a dynamic-program\-ming algorithm, but more efficient solutions have also been shown. The longest common substring with one mismatch problem was first considered in~\cite{DBLP:journals/poit/BabenkoS11}, where an $\Oh(n^2)$-time and $\Oh(n)$-space solution was given. This result was further improved by Flouri et al.~\cite{DBLP:journals/ipl/FlouriGKU15}, who showed an $\Oh(n \log n)$-time and $\Oh(n)$-space solution. 

For a general value of $k$, the problem was first considered by Leimeister and Morgenstern~\cite{kmacs}, who suggested a greedy heuristic algorithm. Flouri et al.~\cite{DBLP:journals/ipl/FlouriGKU15} showed that \kLCS admits a quadratic-time algorithm which takes constant (additional) space. Grabowski~\cite{DBLP:journals/ipl/Grabowski15} presented two output-dependent algorithms with running times $\Oh (n ((k+1) (\ell_0+1))^k)$ and $\Oh (n^2 k/\ell_k)$, where $\ell_0$ is the length of the longest common substring of $T_1$ and $T_2$ and $\ell_k$ is the length of the longest common substring with $k$ mismatches of $T_1$ and $T_2$. Thankachan et al.~\cite{DBLP:journals/jcb/ThankachanAA16} gave an $\Oh(n)$-space, $\Oh(n \log^k n)$-time solution for $k=\Oh(1)$. Very recently, Charalampopoulos et al.~\cite{DBLP:conf/cpm/Charalampopoulos18} extended the underlying techniques and developed an $\Oh(n)$-time 
algorithm for the case of $\ell_k = \Omega(\log^{2k+2} n)$.
Finally, Abboud et al.~\cite{DBLP:conf/soda/AbboudWY15} applied the polynomial method to develop a $k^{1.5} n^2 / 2^{\Omega(\sqrt{(\log n)/k})}$-time randomised solution to the problem. In fact, their algorithm was developed for a more general problem of computing the longest common substring with $k$ edits, but it can be adapted to \kLCS as well. The problem of computing the longest common substring with $k$ edits was also considered in~\cite{DBLP:conf/recomb/ThankachanACA18}, where an $\Oh(n \log^k n)$-time solution was given for constant $k$.

\subsection{Our contribution}
Our contribution is as follows. In Section~\ref{sec:hardness}, we show that existence of a strongly subquadratic-time algorithm for \kLCS on strings over binary alphabet for $k = \Omega(\log n)$ refutes the Strong Exponential Time Hypothesis (SETH) of Impagliazzo, Paturi, and Zane~\cite{DBLP:journals/jcss/ImpagliazzoP01,DBLP:journals/jcss/ImpagliazzoPZ01};
see also~\cite[Chapter 14]{ksiazka}:

\begin{hypothesis}[SETH]
For every $\delta > 0$, there exists an integer $q$ such that SAT on $q$-CNF formulas with $m$ clauses and  $n$ variables cannot be solved in $m^{O(1)} 2^{(1-\delta) n}$ time.
\end{hypothesis}

This conditional lower bound implies that there is little hope to improve existing solutions to \kLCS. To this end, we  introduce a new problem, inspired by the work of Andoni and Indyk~\cite{substringNN}. 

\begin{problem}[\kApproxLCS]\label{pr:LCS'k}
Two strings $T_1, T_2$ of length $n$, an integer $k$, and a constant $\eps > 0$ are given. If $\ell_k$ is the length of the longest common substring with $k$ mismatches of $T_1$ and $T_2$, return a substring of $T_1$ of length at least $\ell_k$ that occurs in $T_2$ with at most $(1+\eps) \cdot k$ mismatches.
\end{problem}

Let $d_H(S_1,S_2)$ denote the Hamming distance between equal-length strings $S_1$ and $S_2$, that is, the number of mismatches between them. Then we are to find the substrings $S_1$ and $S_2$ of $T_1$ and $T_2$, respectively, of length at least $\ell_k$ such that $d_H(S_1,S_2) \le (1+\eps) \cdot k$.

Although the problem statement is not standard, it makes perfect sense from the practical point of view. It is also more robust than the \kLCS problem, as for most applications it is not important whether a returned substring occurs in $T_1$ and $T_2$ with, for example, $10$ or $12$ mismatches. The result is also important from the theoretical point of view as it improves our understanding of the big picture of string comparison. In their work, Andoni and Indyk used the technique of locality-sensitive hashing to develop a space-efficient randomised index for a variant of the approximate pattern matching problem. We extend their work with new ideas in the construction and the analysis to develop a randomised subquadratic-time solution to Problem~\ref{pr:LCS'k}. This result is presented in Section~\ref{sec:kApproxLCS}.

In Section~\ref{sec:ApproxLCS}, we consider approximation algorithms for the length of the \kLCS. By applying previous techniques, we show a strongly subquadratic-time 2-approximation algorithm and show that no strongly subquadratic-time $(2-\eps)$-approximation algorithm exists for any $\eps>0$ unless SETH fails.

Finally, in Section~\ref{sec:AllkLCS} we show a strongly subcubic-time solution for \AllkLCS by reducing it (for arbitrary alphabet size) to \BJI. Namely, we show that \kLCS for all $k = 1, \dots, n$ can be solved in $\Oh(n^{2.859})$ expected time or in $\Oh(n^{2.864})$ deterministic time, improving upon naive computation performed for every $k$ separately.

\section{\kLCS is SETH-hard}
\label{sec:hardness}
Recall that the Hamming distance of two strings $U$ and $V$ of the same length, denoted as $d_H(U,V)$, is simply the number of mismatches. Our proof is based on conditional hardness of the following problem.

\begin{problem}[\OV]~\label{pr:OV}
Given a set $A$ of $N$ vectors from $\{0,1\}^d$ each, does there exist a pair of vectors $U,V \in A$ that is orthogonal, i.e., $\sum_{h=1}^d U[h] V[h]=0$?
\end{problem}

Williams showed a conditional lower bound for an equivalent problem called cooperative subset queries~\cite[Section 5.1]{DBLP:journals/tcs/Williams05}, which immediately implies the following fact:

\begin{fact}\label{fct:OVP}
Suppose there is $\varepsilon>0$ such that for all constant $c$, \OV on a set of $N$ vectors of dimension $d=c\log N$ can be solved in $2^{o(d)} \cdot N^{2-\varepsilon}$ time. Then SETH is false.
\end{fact}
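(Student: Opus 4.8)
The plan is to argue by contraposition: assuming the hypothesised algorithm exists, I will show that for some fixed $\delta>0$ and every integer $q$, SAT on $q$-CNF formulas with $n$ variables and $m$ clauses can be decided in $m^{O(1)}2^{(1-\delta)n}$ time, contradicting SETH. The reduction combines the sparsification lemma of Impagliazzo, Paturi and Zane~\cite{DBLP:journals/jcss/ImpagliazzoPZ01} with Williams' ``split-and-list'' construction~\cite{DBLP:journals/tcs/Williams05}.

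Fix $q$ and a small parameter $\mu>0$ to be chosen at the end. Given a $q$-CNF $\phi$ with $n$ variables and $m$ clauses, first apply the sparsification lemma to write $\phi$ as a disjunction of $t\le 2^{\mu n}$ $q$-CNFs $\phi_1,\dots,\phi_t$, each over the same variables but with at most $Cn$ clauses, where $C=C(q,\mu)$ is a constant; this takes $\mathrm{poly}(n,m)\cdot 2^{\mu n}$ time, and $\phi$ is satisfiable iff some $\phi_i$ is. For each $\phi_i$, split its variables into halves $X,Y$ of size at most $\lceil n/2\rceil$. For every assignment $\alpha$ of $X$ create a vector $U_\alpha\in\{0,1\}^{m_i}$, where $m_i\le Cn$ is the number of clauses of $\phi_i$, setting $U_\alpha[j]=0$ exactly when clause $j$ is already satisfied by $\alpha$; symmetrically create $V_\beta$ for every assignment $\beta$ of $Y$. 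To forbid orthogonal pairs among the $U$'s or among the $V$'s, append two coordinates: one equal to $1$ on all $U_\alpha$ and $0$ on all $V_\beta$, the other reversed. Let $A_i$ be the resulting set; then $N:=|A_i|\le 2^{\lceil n/2\rceil+2}$, each vector has dimension $m_i+2=O(n)=O(\log N)$, and (padding with zero coordinates if necessary) I may take this dimension to be $c\log N$ for the appropriate constant $c=c(q,\mu)$.

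By construction, $A_i$ contains an orthogonal pair iff $\phi_i$ is satisfiable: an assignment $\alpha\cup\beta$ satisfies $\phi_i$ precisely when $U_\alpha[j]\,V_\beta[j]=0$ for every clause $j$; the gadget coordinates eliminate every $U$--$U$ and $V$--$V$ pair and ensure no vector is self-orthogonal; and if $\alpha$ alone satisfies $\phi_i$, then $U_\alpha$ vanishes on the original coordinates and is orthogonal to every $V_\beta$. Running the hypothesised algorithm on $A_i$ (with the constant $c$ above) thus decides satisfiability of $\phi_i$ in $2^{o(d_i)}N^{2-\varepsilon}=2^{o(n)}\cdot 2^{(1-\varepsilon/2+o(1))n}$ time, and summing over the $t\le 2^{\mu n}$ indices decides satisfiability of $\phi$ in $2^{\mu n}\cdot 2^{(1-\varepsilon/2+o(1))n}\,\mathrm{poly}(n,m)$ time. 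Taking $\mu=\varepsilon/4$ makes the exponent $(1-\varepsilon/4+o(1))n$, which for all sufficiently large $n$ is below $(1-\varepsilon/8)n$; handling small $n$ by brute force, $q$-SAT is solved in $m^{O(1)}2^{(1-\delta)n}$ time with $\delta=\varepsilon/8$. Since $\delta$ does not depend on $q$, SETH fails.

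The only genuinely delicate point is keeping the dimension of the orthogonal-vectors instances at $O(\log N)=O(n)$ instead of the $\mathrm{poly}(n)$ bound a naive split-and-list would give; this is exactly what the sparsification lemma is for. The gadget coordinates, the zero-padding, and the manipulation of exponents are all routine.
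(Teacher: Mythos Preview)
Your proof is correct and is the standard argument: sparsification lemma to make the number of clauses linear in $n$, then split-and-list to produce an \OV instance of size $N\approx 2^{n/2}$ and dimension $\Theta(n)=\Theta(\log N)$, with two gadget coordinates to rule out $U$--$U$ and $V$--$V$ orthogonal pairs. The quantifier bookkeeping---$\varepsilon$ fixed once, the constant $c$ in the dimension depending on $q$ (through $C(q,\mu)$) but covered by the ``for all constant $c$'' in the hypothesis, and the resulting $\delta=\varepsilon/8$ independent of $q$---is handled correctly.

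As for comparison with the paper: the paper does not actually prove Fact~\ref{fct:OVP}. It simply attributes the result to Williams~\cite[Section~5.1]{DBLP:journals/tcs/Williams05} (phrased there for cooperative subset queries) and states the fact without argument. Your write-up is precisely the proof that underlies that citation, so there is no methodological divergence to discuss---you have supplied what the paper chose to outsource.
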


We treat vectors from $\{0,1\}^d$ as binary strings of length $d$. Let us introduce two morphisms, $\mu$ and $\tau$:

$$\mu(0)={011}\,1000, \quad \mu(1)={000}\,1000, \quad \tau(0)={001}\,1000, \quad \tau(1)={111}\,1000.$$
We will use the following two observations.

\begin{observation}\label{obs:11}
We have $d_H(\mu(0),\tau(0))=d_H(\mu(0)$, $\tau(1))=d_H(\mu(1),\tau(0))=1$, and $d_H(\mu(1),\tau(1))=3$.
\end{observation}

\begin{observation}\label{obs:1000}
Let $x,y,z \in \{0,1\}$. Then the string $1000$ has exactly two occurrences in $1000xyz1000$.
\end{observation}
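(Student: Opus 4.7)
The plan is to enumerate the possible starting positions for an occurrence of $1000$ in the length-$11$ string $1000xyz1000$ and verify that exactly two of them yield a match, regardless of the values of $x,y,z$. First I would note the two obvious occurrences starting at positions $1$ and $8$, which are present by construction.

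Then I would rule out the remaining positions $2,\dots,7$ in two groups. For positions $2,3,4$, the first character of any alleged occurrence is $0$ (it comes from the prefix $1000$), so the pattern, which begins with $1$, cannot start there. For positions $5,6,7$, any occurrence would overlap the suffix $1000$ that begins at position $8$; since $1$ appears only at the first position of the pattern, the $1$ at position $8$ of $1000xyz1000$ would have to coincide with one of the three $0$'s of the pattern, a contradiction. A short sub-case analysis in $x,y,z$ confirms this: a match at position $5$ would require position $8$ of the string to be the last character of the pattern, i.e.\ $0$; a match at position $6$ would require position $8$ to be the third character of the pattern, i.e.\ $0$; a match at position $7$ would require position $8$ to be the second character of the pattern, i.e.\ $0$.

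The argument is a finite case check, so there is no real obstacle; the only thing to be careful about is ensuring the case analysis is uniform in $x,y,z$, which it is, because the conflict always occurs at the forced $1$ in position $8$.
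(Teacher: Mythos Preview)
Your argument is correct: the two occurrences at positions $1$ and $8$ are immediate, positions $2,3,4$ fail because the pattern starts with $1$, and positions $5,6,7$ fail because the fixed $1$ at position $8$ would have to align with one of the three trailing $0$'s of the pattern. The paper states this observation without proof, treating it as self-evident, so there is nothing further to compare.
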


Let us also introduce a string gadget $H=\gamma^d$, where $\gamma=100\,1000$. Note that $\gamma \ne \mu(x)$ and $\gamma \ne \tau(x)$ for $x \in \{0,1\}$. Further, note that $|H|=|\mu(U)|=|\tau(U)|=7d$ for any $U \in A$.

\begin{lemma}\label{lem:red}
  Consider a set of vectors $A=\{U_1,\dots,U_N\}$ from $\{0,1\}^d$, the strings:
  $$T_1=H^q \mu(U_1) H^q \cdots \mu(U_N) H^q, \quad T_2=H^q \tau(U_1) H^q \cdots \tau(U_N) H^q$$
  for some positive integer $q$, and $k=d$. Then:
  \begin{enumerate}[(a)]
    \item\label{it:a} If the set $A$ contains two orthogonal vectors, then the \kLCS problem for $T_1$ and $T_2$
    has a solution of length at least $\ell=(14q+7)d$.
    \item\label{it:b} If the set $A$ does not contain two orthogonal vectors, then all the solutions for the \kLCS problem for $T_1$ and $T_2$
    have length smaller than $\ell'=(7q+14)d$.
  \end{enumerate}
\end{lemma}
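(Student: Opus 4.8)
The plan is to settle part~(\ref{it:a}) by an explicit construction and part~(\ref{it:b}) by contradiction, extracting an orthogonal pair from any sufficiently long approximate match. For part~(\ref{it:a}), suppose $U_a,U_b\in A$ satisfy $\sum_h U_a[h]U_b[h]=0$. I take $S_1=H^q\mu(U_a)H^q$, which is a substring of $T_1$ of length $(14q+7)d=\ell$, and align it against the substring $H^q\tau(U_b)H^q$ of $T_2$. The three copies of $H^q$ match exactly, so the number of mismatches equals $d_H(\mu(U_a),\tau(U_b))=\sum_{h=1}^{d}d_H(\mu(U_a[h]),\tau(U_b[h]))$, and by Observation~\ref{obs:11} each summand is $1$ unless $U_a[h]=U_b[h]=1$, which never happens by orthogonality; hence the total is exactly $d=k$, so $S_1$ certifies that the \kLCS of $T_1$ and $T_2$ has length at least $\ell$.

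For part~(\ref{it:b}) I assume towards a contradiction a substring $S_1$ of $T_1$ of length $L\ge\ell'$ occurring in $T_2$ as a substring $S_2$ with $d_H(S_1,S_2)\le k=d$. The first step is to force a block alignment. Both $T_1$ and $T_2$ are concatenations of length-$7$ blocks of the shape $xyz\,1000$ (with $xyz=100$ for $\gamma$, and $xyz$ the three leading bits of $\mu(\cdot)$ or $\tau(\cdot)$ otherwise), so by Observation~\ref{obs:1000} the only occurrences of $1000$ in $T_1$ or in $T_2$ are those starting at positions $\equiv 3\pmod 7$. Now $S_1$ contains at least $\lfloor L/7\rfloor-1\ge d(q+2)-1$ such occurrences, at positions spaced at least $7$ apart and hence in pairwise disjoint windows of length $4$; the $\le d$ mismatches spoil at most $d$ of them, so at least $d(q+1)-1\ge 1$ of these occurrences of $1000$ are also present in $S_2$. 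Comparing the residue modulo $7$ of such a shared occurrence inside $T_1$ and inside $T_2$ shows that the starting positions of $S_1$ in $T_1$ and of $S_2$ in $T_2$ are congruent modulo~$7$; discarding the (equally many) characters of the incomplete blocks at the two ends then yields block-aligned substrings $S_1',S_2'$ of equal length, with $m:=|S_1'|/7\ge d(q+2)-1$ blocks and $d_H(S_1',S_2')\le d$.

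The second and main step is to pin down a full $\mu$-segment inside $S_1'$ that consumes the entire mismatch budget. Since every $\mu$-block is at Hamming distance at least $1$ from every block of $T_2$ (by Observation~\ref{obs:11} together with a one-line check against $\gamma$), $S_1'$ contains at most $d$ $\mu$-blocks. On the other hand, in $T_1$ the maximal runs of $\gamma$-blocks have length exactly $dq$ and the $\mu$-segments $\mu(U_\ell)$ are blocks of $d$ consecutive $\mu$-blocks, and a short window-counting argument shows that any $m\ge dq+2d-1$ consecutive blocks must contain one $\mu$-segment $\mu(U_i)$ entirely; this segment exhausts the quota of $d$ $\mu$-blocks, so it is the only non-$\gamma$ material in $S_1'$, and (using $d\ge 2$, which holds in all our applications) it is flanked by a $\gamma$-block on each side inside $S_1'$. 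Writing $[\alpha,\alpha+d)$ for its block-positions, the $d$ blocks of $\mu(U_i)$ already contribute at least $d$ to $d_H(S_1',S_2')\le d$, so the contribution is exactly $d$, every block of $S_1'$ outside $[\alpha,\alpha+d)$ equals the corresponding block of $S_2'$, and---$\gamma$ being the only block common to $T_1$ and $T_2$---both $S_1'$ and $S_2'$ consist solely of $\gamma$-blocks outside $[\alpha,\alpha+d)$. In particular $S_2'$, having $m>dq$ blocks, cannot be all-$\gamma$, so its $d$ blocks at $[\alpha,\alpha+d)$ are not all $\gamma$; being flanked by $\gamma$-blocks on both sides, and since the maximal non-$\gamma$ runs of $T_2$ have length exactly $d$, they form a complete $\tau$-segment $\tau(U_j)$. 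Finally $d_H(\mu(U_i),\tau(U_j))=\sum_{h}d_H(\mu(U_i[h]),\tau(U_j[h]))=d+2\sum_{h}U_i[h]U_j[h]$ by Observation~\ref{obs:11}, and this equals $d$, so $U_i\perp U_j$ (if $i=j$ this forces $U_i$ to be the zero vector, which is orthogonal to every other vector of $A$), contradicting the hypothesis of part~(\ref{it:b}).

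I expect the technical heart, and the main obstacle, to lie in the two structural deductions: forcing the period-$7$ alignment from the occurrences of $1000$, and ruling out the possibility that the $d$ blocks of $S_2'$ facing $\mu(U_i)$ form a mixture of $\gamma$-blocks and fragments of $\tau$-segments rather than a single complete $\tau$-segment. The remaining ingredients---the small Hamming-distance tables for the block morphisms and the window-counting over the structure of $T_1$---are routine.
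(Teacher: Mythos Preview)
Your proof is correct and follows essentially the same strategy as the paper: establish the mod-$7$ alignment via the canonical occurrences of $1000$, use a window-count over the $\gamma^{dq}$/length-$d$ block structure to force a full $\mu$-segment inside the match, and then pair it with a full $\tau$-segment to read off orthogonality from Observation~\ref{obs:11}. The only tactical difference is that the paper locates the $\tau$-segment in $S_2$ by the symmetric window argument and then argues the two segments must be aligned, whereas you deduce it asymmetrically from the exhausted mismatch budget plus the flanking $\gamma$-blocks; your $d\ge 2$ caveat for the flanking step is a genuine (if harmless) restriction the paper's symmetric version avoids.
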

\begin{proof}
\eqref{it:a} Assume that $U_i$ and $U_j$ are a pair of orthogonal vectors.
$T_1$ contains a substring $H^q \mu(U_i) H^q$ and $T_2$ contains a substring $H^q \tau(U_j) H^q$. Both substrings have length $\ell$ and, by Observation~\ref{obs:11}, their Hamming distance is exactly $k = d$.

\eqref{it:b} Assume to the contrary that there are indices $a$ and $b$ for which the substrings $S_1=T_1[a,a+\ell'-1]$ and $S_2=T_2[b,b+\ell'-1]$ have at most $k$ mismatches. First, let us note that $7 \mid a-b$. Indeed, otherwise $S_1$ would contain at least $\lfloor (\ell'-3)/7 \rfloor=(q+2)k-1\ge k+1$ substrings of the form $1000$ which, by Observation~\ref{obs:1000}, would not be aligned with substrings $1000$ in $S_2$. Hence, they would account for more than $k$ mismatches between $S_1$ and $S_2$.

Let us call all the substrings of $T_1$ and $T_2$ that come from the 3-character prefixes of $\mu(0)$, $\mu(1)$, $\tau(0)$, $\tau(1)$, and $\gamma$ the \emph{core substrings}, with core substrings that come from $\gamma$ being \emph{gadget core substrings}. We have already established that the core substrings of $S_1$ and $S_2$ are aligned. Moreover, $S_1$ and $S_2$ contain at least $\lfloor (\ell'-2)/7 \rfloor=(q+2)k-1$ core substrings each. Amongst every $(q+2)k-1$ consecutive core substrings in $S_1$, some $k$ consecutive must come from $\mu(U_i)$ for some index $i$; a symmetric property holds for $S_2$ and $\tau(U_j)$. Moreover, as only the gadget core substrings in $S_1$ and $S_2$ can match exactly, at most $k$ core substrings that are contained in $S_1$ and $S_2$ can be non-gadget. Hence, $S_1$ and $S_2$ contain exactly $k$ non-gadget core substrings each. If they were not aligned, they would have produced more than $k$ mismatches in total with the gadget core substrings.

Therefore, $S_1$ and $S_2$ must contain, as aligned substrings, $\mu(U_i)[1,7d-4]$ and $\tau(U_j)[1,7d-4]$ for some $i,j \in \{1,\ldots,N\}$, respectively. Hence, $d_H(U_i, U_j) \le k$. By Observation~\ref{obs:11}, we conclude that $U_i$ and $U_j$ are orthogonal.
\end{proof}
  
\begin{theorem}
Suppose there is $\eps > 0$ such that \kLCS can be solved in $\Oh(n^{2-\varepsilon})$ time on strings over binary alphabet for $k = \Omega(\log n)$. Then SETH is false.
\end{theorem}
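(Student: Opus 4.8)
The plan is to reduce \OV to \kLCS via Lemma~\ref{lem:red} and then invoke Fact~\ref{fct:OVP}. Assume $\mathcal{A}$ solves \kLCS on binary strings of length $n$ in $\Oh(n^{2-\eps})$ time whenever $k\ge c_1\log n$ for a fixed constant $c_1$. Fix an arbitrary constant $c$; I will solve \OV on $N$ vectors of dimension $d=c\log N$ within the bound of Fact~\ref{fct:OVP}. Replacing $c$ by $c'=\max\{c,\,c_1+1\}$ and padding every input vector with $(c'-c)\log N$ extra zero coordinates changes neither $N$ nor the set of orthogonal pairs, so it suffices to handle the padded instance, of dimension $d'=c'\log N$; thus we may assume from the start that $c\ge c_1+1$. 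Now apply Lemma~\ref{lem:red} with $q=1$ and $k=d$ to build the binary strings $T_1,T_2$ in $\Oh(n)$ time, where $n=7d(2N+1)=\Theta(N\log N)$. For $q=1$ the two thresholds coincide, $\ell=\ell'=21d$, and $\log n=\log N+\Oh(\log\log N)$, so $k=d=c\log N\ge c_1\log n$ for all sufficiently large $N$ (instances with $N$ below a constant are handled by brute force), and hence $\mathcal A$ is applicable.

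Next I would run $\mathcal A(T_1,T_2,k)$ to obtain a longest substring $S$ of $T_1$ occurring in $T_2$ with at most $k$ mismatches, and answer ``$A$ has an orthogonal pair'' if and only if $|S|\ge 21d$. Correctness is immediate from Lemma~\ref{lem:red}: part~\eqref{it:a} gives $|S|\ge\ell=21d$ when $A$ contains an orthogonal pair, while part~\eqref{it:b} gives $|S|<\ell'=21d$ when it does not. The total running time is $\Oh(n)+\Oh(n^{2-\eps})=\Oh\bigl((N\log N)^{2-\eps}\bigr)=N^{2-\eps}\cdot(\log N)^{\Oh(1)}$, and since $(\log N)^{\Oh(1)}=2^{\Oh(\log\log N)}=2^{o(\log N)}=2^{o(d)}$, this is of the form $2^{o(d)}\cdot N^{2-\eps}$ required by Fact~\ref{fct:OVP}. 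As $c$ was arbitrary, Fact~\ref{fct:OVP} yields that SETH is false.

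Essentially all the combinatorial content is already contained in Lemma~\ref{lem:red}, so the theorem is mostly a matter of choosing parameters. The two points that need care are: (i) meeting the precondition $k=\Omega(\log n)$ of $\mathcal A$, whose hidden constant is fixed while the reduction only produces $k\sim c\log n$ — this is why we use the zero-padding of \OV instances together with the \emph{``for all constant $c$''} quantifier of Fact~\ref{fct:OVP}; and (ii) verifying that the polylogarithmic overhead incurred by the reduction is absorbed by the $2^{o(d)}$ slack that Fact~\ref{fct:OVP} provides. I expect this parameter bookkeeping to be the main (and only) obstacle.
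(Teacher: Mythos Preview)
Your proof is correct and follows essentially the same approach as the paper: apply Lemma~\ref{lem:red} with $q=1$ (so that $\ell=\ell'=21d$), obtain an instance with $n=7d(2N+1)$ and $k=d$, run the assumed $\Oh(n^{2-\eps})$ algorithm, and invoke Fact~\ref{fct:OVP}. You are in fact more careful than the paper about the precondition $k=\Omega(\log n)$---the paper simply asserts it holds when $d=c\log N$, whereas your zero-padding trick (using the ``for all $c$'' quantifier in Fact~\ref{fct:OVP}) makes explicit how to satisfy the hidden constant in that $\Omega$.
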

\begin{proof}
The reduction of Lemma~\ref{lem:red} with $q=1$ constructs, for an instance of the \OV problem with $N$ vectors of dimension $d$, an equivalent instance of the \kLCS problem with strings of length $n=7d(2N+1)$ and $k=d$. Thus, assuming that \kLCS can be solved in $\Oh(n^{2-\eps})$ time for $k = \Omega(\log n)$, the constructed instance can be solved in $\Oh(N^{2-\eps}d^{\Oh(1)})$ time if $d = c \log N$. This, by Fact~\ref{fct:OVP}, contradicts SETH.
\end{proof}

\section{\kApproxLCS}
\label{sec:kApproxLCS}	
	In this section, we prove the following theorem. 

\begin{theorem}\label{th:main}
Let $\eps \in (0,2)$ and $\delta \in (0,1)$ be arbitrary constants. The \kApproxLCS problem can be solved in $\Oh(n^{1+1/(1+\eps)} )$ space and $\Oh (n^{1+1/(1+\eps)} \log^2 n)$ time with error probability $\delta$.
\end{theorem}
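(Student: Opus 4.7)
The plan is to adapt the bit-sampling locality-sensitive hashing (LSH) scheme for Hamming distance, combined with a binary search over the output length. Concretely, I binary-search over candidate lengths $\ell \in \{1,\ldots,n\}$; for each $\ell$ I run a randomized routine that, with probability at least $1-\delta/n$, correctly decides whether there exist substrings of length $\ell$, one of $T_1$ and one of $T_2$, at Hamming distance at most $(1+\eps)k$, and in the positive case returns such a pair. Taking the largest successful $\ell$ produces a witness of length at least $\ell_k$, since the test must succeed at $\ell=\ell_k$.

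For a fixed $\ell$, a hash function $h$ from the LSH family is determined by $g$ positions $i_1,\ldots,i_g$ sampled uniformly and independently from $\{1,\ldots,\ell\}$, with $h(S)=(S[i_1],\ldots,S[i_g])$. A pair of length-$\ell$ strings at Hamming distance $d$ collides under $h$ with probability $(1-d/\ell)^g$. Setting $g=\Theta\bigl(\log n/\log(1/p_2)\bigr)$ with $p_2 = 1-(1+\eps)k/\ell$ makes any pair at distance greater than $(1+\eps)k$ collide with probability at most $1/n$, and using $L=\Theta(n^{\rho}\log(n/\delta))$ independent hash tables with
\[ \rho \;=\; \frac{\log(1-k/\ell)}{\log(1-(1+\eps)k/\ell)} \;\le\; \frac{1}{1+\eps} \]
(the inequality following from the Bernoulli estimate $(1-x)^{1+\eps}\ge 1-(1+\eps)x$) guarantees that any pair at Hamming distance at most $k$ collides in some table with probability at least $1-\delta/n$. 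For each candidate $\ell$ I hash all $\le n$ length-$\ell$ substrings of $T_1$ and of $T_2$; every hash evaluation takes $O(g)$ time given random access to the texts, for a hashing cost of $\tilde{O}(n^{1+\rho})$ per $\ell$ and $\tilde{O}(n^{1+1/(1+\eps)})$ overall.

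Each bucket of each table is then scanned for cross-collisions between a substring of $T_1$ and a substring of $T_2$, and each colliding pair is to be verified. The main obstacle will be to control the total verification cost: the expected number of spurious collisions (pairs at distance greater than $(1+\eps)k$ colliding in some table) is at most $L\cdot n^2\cdot p_2^g = \tilde{O}(n^{1+\rho})$ per length, and naively paying $\Theta(k)$ time per verification via kangaroo jumps on top of a linear-time LCE structure over $T_1$ and $T_2$ would blow up the budget when $k$ is large. I expect this to be resolved by (i) stopping verification as soon as one valid pair has been produced for the current $\ell$, which is enough since a single witness per length suffices; and (ii) exploiting that the random LSH collision counts themselves act as a cheap filter---a pair at distance at most $k$ is expected to collide in $\ge \log(n/\delta)$ of the $L$ tables, while a pair at distance greater than $(1+\eps)k$ collides in $o(1)$ tables in expectation---so a simple threshold test in $O(\log n)$ time per candidate pair rejects most spurious collisions before any explicit Hamming-distance computation is performed. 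Together with the $O(\log n)$ binary-search overhead, this yields the claimed $\Oh(n^{1+1/(1+\eps)}\log^2 n)$ time and $\Oh(n^{1+1/(1+\eps)})$ space with overall error probability at most $\delta$.
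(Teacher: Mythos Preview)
Your high-level plan is natural, but two steps in the running-time accounting do not go through. First, the hashing cost: you treat $g$ as polylogarithmic, but $g=\Theta(\log n/\log(1/p_2))$ with $p_2=1-(1+\eps)k/\ell$, so when $k\ll\ell$ one has $\log(1/p_2)=\Theta(k/\ell)$ and hence $g=\Theta((\ell/k)\log n)$; for $k=\Oh(1)$ and $\ell=\Theta(n)$ this is $g=\Theta(n\log n)$, and evaluating $L$ hashes on $\Oh(n)$ substrings already costs $nLg=\tilde{\Oh}(n^{2+\rho})$. There is no rolling trick here, since shifting the window by one changes every sampled coordinate. Second, your collision-count filter cannot run in $\Oh(\log n)$ per pair: counting how many of the $L=\tilde{\Oh}(n^{\rho})$ tables a pair collides in requires querying all $L$ tables, and sampling $\Oh(\log n)$ of them does not help because even a close pair collides in any single table only with probability $n^{-\rho}$. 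Without a fast filter you pay $\Theta(k)$ per kangaroo verification over $\tilde{\Oh}(n^{1+\rho})$ spurious collisions, again over budget for large $k$. (A minor third issue: the binary-search predicate ``the routine found a witness'' is not monotone in $\ell$, since for lengths where the closest pair lies in the gap $(k,(1+\eps)k]$ the routine may answer either way.)

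The paper's proof is organised differently to sidestep both obstacles. It projects \emph{suffixes} onto fixed positions in $\{1,\ldots,n\}$, so a single hash family $\Hashes$ serves all lengths simultaneously via the prefix restrictions $\restr{h}{[\ell]}$, and it sorts the suffixes under each base projection $u_{i,r}$ in $\Oh(n^{4/3+o(1)})$ time \emph{independent of how many coordinates are sampled} (Theorem~\ref{th:sort}, which balances an FFT/fingerprint method against a kangaroo-based one); the actual hash functions are tensored $t$-tuples of the $u_{i,r}$, each trie then costing only $\Oh(tn\log n)$ on top. For verification it builds a sketch-based $\lcp$ oracle (Lemma~\ref{lm:lcp_k}) that answers each query in $\Oh(\log^2 n)$ time, and it caps the number of queries not by early stopping but by locating the largest $\ell$ with $|\Collisions_\ell^{\Hashes}|\ge 2n|\Hashes|$, processing all of $\Collisions_{\ell+1}^{\Hashes}$ (which is small by choice of $\ell$), and drawing a single random triple from $\Collisions_\ell^{\Hashes}$ (which is good with probability $\ge\tfrac12$ because bad collisions number at most $n|\Hashes|$ in expectation).
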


\subsection{Overview of the proof}
	The classic solution to the longest common substring problem is based on two observations. The first observation is that the longest common substring of $T_1$ and $T_2$ is in fact the longest common prefix of some suffix of $T_1$ and some suffix of $T_2$. The second observation is that the maximal length of the longest common prefix of a fixed suffix $S$ of $T_1$ and suffixes of $T_2$ is reached by one of the two suffixes of $T_2$ that are closest to $S$ in the lexicographic order. This suggests the following algorithm: First, build a suffix tree of $T_1$ and $T_2$, which contains all suffixes of $T_1$ and $T_2$ ordered lexicographically. Second, compute the longest common prefix of each suffix of $T_1$ and the two suffixes of $T_2$ closest to $S$ in the lexicographic order, one from the left and one from the right. The problem of computing the longest common prefix has been extensively studied in the literature and a number of very efficient deterministic and randomised solutions exist~\cite{LCE-1,DBLP:journals/jda/BilleGSV14,DBLP:journals/siamcomp/FischerH11,LCE-4,DBLP:journals/siamcomp/HarelT84}; for example, one can use a Lowest Common Ancestor (LCA) data structure, which can be constructed in linear time and space and answers longest common prefix queries in $\Oh(1)$ time~\cite{DBLP:journals/siamcomp/FischerH11,DBLP:journals/siamcomp/HarelT84}.

Our solution to the longest common substring with approximately $k$ mismatches problem is somewhat similar. Instead of the lexicographic order, we will consider $\Theta (n^{1/(1+\eps)})$ different orderings on the suffixes of $T_1$ and~$T_2$. To define these orderings, we will use the locality-sensitive hashing technique, which was initially introduced for the needs of computational geometry~\cite{DBLP:journals/toc/Har-PeledIM12} and later adapted for substrings with Hamming distance~\cite{substringNN}. In more detail, we will choose $\Theta (n^{1/(1+\eps)})$ hash functions, where each function can be considered as a projection of a string of length $n$ onto a random subset of its positions. By choosing the size of the subset appropriately, we will be able to guarantee that the hash function is locality-sensitive: For any two strings at the Hamming distance at most $k$, the values of the hash functions on them will be equal with reasonably high probability, while the values of the hash functions on any pair of strings at the Hamming distance bigger than $(1+\eps) \cdot k$ will be equal with low probability.
For each hash function, we will sort the suffixes of $T_1$ and $T_2$ by the lexicographic order on their hash values. As a corollary of the locality-sensitive property, if two suffixes of $T_1$ and $T_2$ have a long common prefix with at most $k$ mismatches, they are likely to be close to each other in at least one of the orderings.

However, we will not be able to compute the longest common prefix with $(1+\eps) k$ mismatches for all candidate pairs of suffixes exactly (the best data structure, based on the kangaroo method~\cite{DBLP:journals/tcs/LandauV86,kangaroo-2}, has query time $\Theta((1+\eps)k)$ which is $\Theta(n)$ in the worst case). We will use this method for only one pair of suffixes chosen at random from a carefully preselected set of candidate pairs. For other candidate pairs, we will use $\lcp$ queries.  In an $\lcp$ query, we are given two suffixes $S_1, S_2$ of $T_1$ and $T_2$, respectively, and must output any integer $\ell$ such that $\lcpk(S_1,S_2)\le \ell \le  \lcpe(S_1,S_2)$, where 
$\lcpk$ and $\lcpe$ denote the longest common prefix with at most $k$ and at most $(1+\eps)k$ mismatches, respectively.
In Section~\ref{sec:kLCP}, we show the following lemma based on the sketching techniques by Kushilevitz et al.~\cite{DBLP:journals/siamcomp/KushilevitzOR00}:

\begin{lemma}\label{lm:lcp_k}
For given $k$ and $\eps$, after $\Oh (n \log^3 n)$-time and $\Oh (n \log^2 n)$-space preprocessing of strings $T_1,T_2$, any $\lcp$  query can be answered in $\Oh(\log^2 n)$ time. 
With probability at least $1-1/n^3$, the preprocessing produces a data structure that correctly answers all $\lcp$ queries.
\end{lemma}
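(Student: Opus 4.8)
The plan is to reduce an $\lcp$ query to a small number of Hamming-distance-threshold queries on equal-length substrings, each answered approximately via a sketching scheme, and then to combine these answers by binary (or exponential) search over the candidate prefix length. Concretely, I would first recall the sketch of Kushilevitz, Ostrovsky, and Rabani~\cite{DBLP:journals/siamcomp/KushilevitzOR00}: for a parameter $t$, there is a randomized linear sketch of size $\Oh(\log n)$ (over an appropriate field, using random $\pm1$ or random-subset projections) such that from the sketches of two binary strings $S_1, S_2$ one can, with constant one-sided error, distinguish the case $d_H(S_1,S_2)\le t$ from the case $d_H(S_1,S_2) > (1+\eps)t$; the answer on the intermediate range is unconstrained, which is exactly the slack an $\lcp$ query allows. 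To make such a sketch usable for \emph{every} prefix length simultaneously, I would build it for each of the $\Oh(\log n)$ scales $t \in \{2^0, 2^1, \dots, n\}$ and, within a fixed scale, make the projection incremental in the string length so that the sketch of a length-$m$ prefix of a suffix can be read off in $\Oh(\log n)$ time from precomputed prefix-sum structures over $T_1$ and $T_2$; storing these structures takes $\Oh(n\log^2 n)$ space and $\Oh(n\log^3 n)$ preprocessing time (an extra $\log n$ factor coming from the $\Oh(\log n)$ scales, and another from the $\Oh(\log n)$ coordinates of each sketch).

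Given this machinery, a single $\lcp$ query is answered as follows. For a query on suffixes $S_1$ of $T_1$ and $S_2$ of $T_2$, I perform a binary search on the prefix length $\ell$; at each tested value $\ell$, I take $t$ to be the scale with $2^{\lceil \log(k) \rceil}$, compare the two length-$\ell$ prefixes using the threshold-$t$ sketch, and recurse to larger or smaller $\ell$ according to whether the test reports ``$\le k$ mismatches'' or ``$> (1+\eps)k$ mismatches''. Each probe costs $\Oh(\log n)$ time (to assemble the two sketches and compare them), and there are $\Oh(\log n)$ probes, giving the claimed $\Oh(\log^2 n)$ query time. Monotonicity of $d_H$ of prefixes in $\ell$ guarantees that, \emph{conditioned on every sketch comparison being correct}, the value $\ell$ returned satisfies $\lcpk(S_1,S_2)\le \ell\le \lcpe(S_1,S_2)$.

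The main obstacle is the probability bookkeeping: a naive union bound over all $\Theta(n^2)$ possible $\lcp$ queries, or even over all queries actually issued by the algorithm of Theorem~\ref{th:main}, would be far too costly if the sketch only has constant success probability. The fix is to first amplify each single sketch comparison to error $\le n^{-c}$ for a large enough constant $c$ by taking $\Oh(\log n)$ independent copies and majority-voting (this is where the final $\log$ factors in the preprocessing bounds are spent), and then to observe that a query-answering structure is ``good'' if \emph{all} of its $\Oh(n^2\log n)$ relevant sketch comparisons are simultaneously correct; choosing $c$ large enough makes this event hold with probability at least $1-1/n^3$ by a union bound over the polynomially many (suffix pair, probed length, scale) triples. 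The one subtlety to handle carefully is that the randomness of the sketches must be fixed once at preprocessing time and then reused across all queries — so the guarantee is of the form ``with probability $\ge 1-1/n^3$ the preprocessing produces a structure that is correct on every $\lcp$ query,'' which is precisely the statement of the lemma; I would make sure the union bound is taken over queries, not over coin tosses per query, to get this stronger ``works for all queries'' formulation.
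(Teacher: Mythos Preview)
Your high-level plan---KOR sketches to approximately threshold the Hamming distance, a search over the prefix length, amplification to per-comparison error $n^{-c}$, and a union bound over all polynomially many (pair, length) instances---is exactly the paper's strategy, and your correctness argument for the search (the predicate is forced to be true on $[1,\lcpk]$ and false on $(\lcpe,n]$, so any boundary the search finds lies in between) is fine.

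The genuine gap is the sentence ``the sketch of a length-$m$ prefix of a suffix can be read off in $\Oh(\log n)$ time from precomputed prefix-sum structures over $T_1$ and $T_2$.'' This is where the argument breaks. The KOR guarantee---that each sketch coordinate of $S_1$ and $S_2$ disagrees with a probability determined by $d_H(S_1,S_2)$---requires that the \emph{same} random multiplier be applied at position $a$ of $S_1$ and at position $a$ of $S_2$. If you align the random vector $r$ with the text and use ordinary prefix sums, then a substring of $T_1$ starting at $j_1$ and a substring of $T_2$ starting at $j_2$ see the independent multipliers $r[j_1+a-1]$ and $r[j_2+a-1]$ at their $a$-th positions, and the sketch comparison tells you nothing about their Hamming distance (even identical substrings get unrelated sketches). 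If instead you align $r$ with the substring, the sketch of $T[j,j+\ell-1]$ is $\sum_{a=1}^{\ell} r[a]\,T[j+a-1]$; for fixed $j$ this is a running sum in $\ell$, but computing it for all $j$ is a cross-correlation, not a prefix sum.

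This is precisely what the paper does and where its $\log$ factors come from: for each power-of-two length $\ell$ and each of the $\lambda=\Oh(\log n)$ sketch coordinates it runs one FFT in $\Oh(n\log n)$ time to obtain the sketches of all length-$\ell$ substrings, giving $\Oh(n\log^3 n)$ preprocessing and $\Oh(n\log^2 n)$ space. An arbitrary-length prefix is handled by its binary decomposition into $\Oh(\log n)$ power-of-two blocks whose sketches are summed coordinate-wise. The query is a bit-by-bit search that \emph{maintains the current prefix sketch incrementally}, adding one precomputed block in $\Oh(\lambda)=\Oh(\log n)$ time per step; this is what keeps the query at $\Oh(\log^2 n)$ rather than the $\Oh(\log^3 n)$ you would get by recomputing the sketch from scratch at every probe of a plain binary search. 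Finally, note that only the single threshold $k$ versus $(1+\eps)k$ is ever needed, so your $\Oh(\log n)$ scales for $t$ are superfluous; the two extra $\log n$ factors in the preprocessing come from the $\Oh(\log n)$ \emph{length} scales and from the FFT, not from multiple threshold scales.
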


The key idea is to compute sketches for all power-of-two length substrings of $T_1$ and $T_2$. The sketches will have logarithmic length (so that we will be able to compare them very fast) and the Hamming distance between them will be roughly proportional to the Hamming distance between the original substrings. Once the sketches are computed, we use binary search to answer $\lcp$ queries in polylogarithmic time.

\subsection{Proof of Lemma~\ref{lm:lcp_k}}
\label{sec:kLCP}
	During the preprocessing stage, we compute sketches~\cite{DBLP:journals/siamcomp/KushilevitzOR00} of all substrings of the strings $T_1$ and $T_2$ of lengths $\ell = 1, 2, 4, \ldots, 2^{\lfloor \log n \rfloor}$, which can be defined in the following way. Without loss of generality, assume that the alphabet is $\Sigma = \{0,1,\ldots,p-1\}$, where $p$ is a prime number. For a fixed $\ell$, choose $\lambda = \lceil 3\ln n / \gamma^2 \rceil$ vectors $r_\ell^i$ of length $\ell$, where $\gamma$ is a constant to be defined later, such that
the values $r_\ell^i [j]$ across $i = 1, 2, \dots, \lambda$ and $j = 1, 2, \dots, \ell$ are independent and identically distributed so that for every $a\in \Sigma$:
$$\Prob[r_\ell^i [j] = a] =
\begin{cases}
1-\frac{p-1}{2kp} & \mbox { if } a = 0,\\
\frac{1}{2kp} & \mbox{ otherwise. }
\end{cases}
$$
For a string $X$ of length $\ell$, we define the sketch $\sk (X)$ to be a vector of length $\lambda$, where $\sk (X) [i] = r_\ell^i \cdot X \pmod p$. For each $i = 1, 2, \ldots, \lambda$, we compute the inner product of $r_\ell^i$ with all length-$\ell$ substrings of $T_1$ and $T_2$ in $\Oh(n \log n)$ time by running the Fast Fourier Transform (FFT) algorithm in the field $\mathbb{Z}_p$~\cite{FischerPaterson}. As a result, we obtain the sketches of each length-$\ell$ substring of $T_1$ and $T_2$. We repeat this step for all specified values of $\ell$. One instance of the FFT algorithm takes $\Oh(n \log n)$ time, and we run an instance for each $i = 1, 2, \ldots, \lambda$ and for each $\ell = 1, 2, 4, \ldots, 2^{\lfloor \log n \rfloor}$, which takes $\Oh (n \log^3 n)$ time in total. The sketches occupy $\Oh (n \log^2 n)$ space. Each string $S$ can be decomposed uniquely as $X_1 X_2 \ldots X_g$, where $g = \Oh(\log n)$ and $|X_1| > |X_2| > \ldots > |X_g|$ are powers of two; we define a sketch $\sk (S) = \sum_q \sk (X_q) \pmod p$.
Let $\delta_1 = \frac{p-1}{p} (1 - (1-\frac{1}{2k})^{k})$ and $\delta_2 = \frac{p-1}{p} (1 - (1-\frac{1}{2k})^{(1+\eps) \cdot k})$.

\begin{lemma}[see \cite{DBLP:journals/siamcomp/KushilevitzOR00}]\label{lm:success_prob}
Let $S_1,S_2$ be strings of the same length. For each $i=1,\ldots,\lambda$:
\begin{itemize}
  \item if $d_H(S_1, S_2) \le k$, then $\sk (S_1)[i] \neq \sk (S_2)[i]$ with probability at most $\delta_1$;
  \item if $d_H(S_1, S_2) \ge (1+\eps) \cdot k$, then $\sk (S_1)[i] \neq \sk (S_2)[i]$ with probability at least $\delta_2$.
\end{itemize}
\end{lemma}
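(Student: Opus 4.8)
The plan is to reduce the statement to a single fact about random linear forms over $\mathbb{Z}_p$. First observe that, since $S_1$ and $S_2$ have the same length $m$, their (unique) decompositions into power-of-two blocks $S_j = X_1^{(j)}\cdots X_g^{(j)}$ use exactly the same block lengths $\ell_1>\dots>\ell_g$, so the blocks $X_q^{(1)}$ and $X_q^{(2)}$ occupy the same positions inside $S_1$ and $S_2$. Consequently $\sk(S_1)[i]-\sk(S_2)[i]\equiv\sum_q r_{\ell_q}^i\cdot\bigl(X_q^{(1)}-X_q^{(2)}\bigr)\equiv R\cdot D\pmod p$, where $D\in\mathbb{Z}_p^m$ is the coordinate-wise difference of $S_1$ and $S_2$ and $R\in\mathbb{Z}_p^m$ is the concatenation of the vectors $r_{\ell_q}^i$. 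Because the $r_\ell^i$ are drawn with independent randomness across the different block lengths, the entries of $R$ are i.i.d.\ with $\Prob[R[j]=0]=1-\frac{p-1}{2kp}$ and $\Prob[R[j]=a]=\frac1{2kp}$ for each $a\neq 0$. Writing $t:=d_H(S_1,S_2)$ and letting $\Pos$ be the set of the $t$ positions on which $D$ is nonzero, the claim reduces to estimating $\Prob\bigl[\sum_{j\in\Pos}R[j]D[j]\equiv 0\pmod p\bigr]$.

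Next I would compute this probability exactly. Condition on the random subset $J\subseteq\Pos$ of positions where $R$ is nonzero; it has size $s$ with probability $\binom ts p_0^{t-s}(1-p_0)^s$, where $p_0:=1-\frac{p-1}{2kp}$. Since $p$ is prime, $\mathbb{Z}_p$ is a field, so for $j\in J$ the value $R[j]$ is uniform over the nonzero residues and $R[j]D[j]$ is again uniform over the nonzero residues; hence, conditioned on $|J|=s$, the sum $\sum_{j\in J}R[j]D[j]$ is distributed as $Z_1+\dots+Z_s$ with the $Z_l$ i.i.d.\ uniform over $\mathbb{Z}_p\setminus\{0\}$. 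A short conditioning argument on $Z_1+\dots+Z_{s-1}$ yields the recurrence $P_s=\frac1{p-1}(1-P_{s-1})$ for $s\ge1$ with $P_0=1$, where $P_s:=\Prob[Z_1+\dots+Z_s\equiv0\pmod p]$; this solves to $P_s=\frac1p+\frac{p-1}p\bigl(\tfrac{-1}{p-1}\bigr)^s$. Summing over $s$ and applying the binomial theorem twice,
$$\Prob[\sk(S_1)[i]=\sk(S_2)[i]]=\sum_{s=0}^t\binom ts p_0^{t-s}(1-p_0)^s P_s=\frac1p+\frac{p-1}p\Bigl(p_0-\tfrac{1-p_0}{p-1}\Bigr)^{t}=\frac1p+\frac{p-1}p\Bigl(1-\tfrac1{2k}\Bigr)^{t},$$
the last equality using the identity $p_0-\frac{1-p_0}{p-1}=1-\frac1{2k}$.

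Finally, taking complements gives $\Prob[\sk(S_1)[i]\neq\sk(S_2)[i]]=\frac{p-1}p\bigl(1-(1-\tfrac1{2k})^{t}\bigr)$, which is strictly increasing in $t=d_H(S_1,S_2)$ because $0<1-\frac1{2k}<1$. Substituting $t\le k$ yields the first bullet with $\delta_1=\frac{p-1}p\bigl(1-(1-\tfrac1{2k})^{k}\bigr)$, and substituting $t\ge(1+\eps)k$ yields the second bullet with $\delta_2=\frac{p-1}p\bigl(1-(1-\tfrac1{2k})^{(1+\eps)k}\bigr)$.

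I do not expect a deep obstacle here: once the reduction to $R\cdot D$ is in place, the rest is elementary modular-arithmetic bookkeeping, and indeed the result is essentially that of Kushilevitz, Ostrovsky, and Rabani. The one point requiring care is the first step — verifying that equal lengths force the power-of-two blocks of $S_1$ and $S_2$ to be position-aligned, so that the sketch difference collapses to a single inner product over $\mathbb{Z}_p$ whose coefficient vector genuinely has i.i.d.\ entries with the prescribed marginal; after that, the exact formula for $\Prob[\sk(S_1)[i]=\sk(S_2)[i]]$ drops out and monotonicity in $t$ finishes the proof.
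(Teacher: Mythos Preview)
Your proof is correct and reaches exactly the same closed form $\Prob[\sk(S_1)[i]\neq\sk(S_2)[i]]=\tfrac{p-1}{p}\bigl(1-(1-\tfrac{1}{2k})^{t}\bigr)$ that the paper derives, but the route is different. The paper reinterprets the distribution of each entry of $r_\ell^i$ as a two-step process: first ``sample'' each position independently with probability $\tfrac{1}{2k}$, then assign sampled positions a value uniform over all of $\Sigma=\{0,\ldots,p-1\}$ (not just the nonzero residues). Under this interpretation, if no mismatch position is sampled the sketches agree; if at least one is sampled, then conditioning on everything except the value at the last sampled mismatch position, that value is uniform over $\mathbb{Z}_p$ and exactly one choice yields equality, so the conditional probability of disagreement is $\tfrac{p-1}{p}$. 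Multiplying by the probability $1-(1-\tfrac{1}{2k})^{t}$ that some mismatch is sampled gives the formula in one line.

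Your argument instead conditions on the set $J$ of positions where $R$ is \emph{nonzero}, which forces you to analyse sums of i.i.d.\ uniform \emph{nonzero} residues via the recurrence $P_s=\tfrac{1}{p-1}(1-P_{s-1})$ and then close the binomial sum. This is perfectly valid and yields the identical exact probability; it is just more computational. The paper's two-step sampling trick buys brevity (no recurrence, no generating-function-style summation), while your approach has the minor advantage of never needing to invent an auxiliary sampling model---you work directly with the stated distribution. Your opening observation that equal-length strings force identical power-of-two block decompositions, so that $\sk(S_1)[i]-\sk(S_2)[i]$ collapses to a single inner product $R\cdot D$ with i.i.d.\ coefficients, is a point the paper leaves implicit but that you were right to spell out.
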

\begin{proof}
We use a different interpretation of $r_\ell^i$ that defines the same distribution. We start with the zero vector and sample positions with probability $\frac{1}{2k}$.
For each sampled position $j$, we decide on the value $r_{\ell}^i[j]\in \Sigma$ independently and uniformly at random. Let $m = d_H(S_1, S_2)$ and $a_1, \ldots, a_m$ be the positions of the mismatches between the two strings. If none of the positions $a_1, \ldots, a_m$ are sampled, then $\sk (S_1)[i] = \sk (S_2)[i]$. Otherwise,
if $a_{j_1},\ldots,a_{j_g}$ are sampled, for each $r_{\ell}^i[a_{j_1}], \ldots, r_{\ell}^i[a_{j_{g-1}}]$ exactly one of the $p$ choices of $r_{\ell}^i[a_{j_g}]$ results in $\sk (S_1)[i] = \sk (S_2)[i]$ (because $p$ is prime). Hence, the probability that $\sk (S_1)[i] \neq \sk (S_2)[i]$ is equal to $\frac{p-1}{p}(1-(1 - \frac{1}{2k})^{m})$, which is at most $\delta_1$ if $d_H(S_1, S_2) \le k$, and at least $\delta_2$ if the Hamming distance is at least $(1+\eps) \cdot k$.
\end{proof}

We set $\Delta = \frac{\delta_1+\delta_2}{2} \cdot \lambda$
and $\gamma = \frac{\delta_2-\delta_1}{2}$.
Observe that 

$$\gamma  = \tfrac{p-1}{2p}(1-\tfrac{1}{2k})^k\left(1-(1-\tfrac{1}{2k})^{\eps k}\right)\ge \tfrac18 \left(1-e^{-\eps/2}\right)=\Omega(\eps^{-1})$$

\noindent
because $(1-\frac{1}{2k})^k$ is an increasing function of $k$ bounded from above by $e^{1/2}$. Consequently, if $\eps$ is a constant, then $\gamma$ is a constant as well. 

\begin{lemma}\label{lm:klcperr}
For all strings $S_1$ and $S_2$ of the same length,
the following claims hold with probability at least $1-n^{-6}$:
\begin{itemize}
  \item if $d_H(\sk (S_1),\sk (S_2))>\Delta$, then $d_H(S_1, S_2)>k$;
  \item if $d_H(\sk (S_1),\sk (S_2)) \le \Delta$, then $d_H(S_1,S_2) < (1+\eps) \cdot k$.
\end{itemize}
\end{lemma}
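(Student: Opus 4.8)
The plan is to fix an arbitrary pair of equal-length strings $S_1,S_2$ and show that, with probability at least $1-n^{-6}$, both implications in the statement hold for this pair. Let $Z_i$ be the indicator of the event $\sk(S_1)[i]\neq\sk(S_2)[i]$, so that $D:=\sum_{i=1}^{\lambda}Z_i$ equals $d_H(\sk(S_1),\sk(S_2))$. The first step is to observe that $Z_1,\dots,Z_\lambda$ are mutually independent: since $|S_1|=|S_2|$, the two strings decompose into blocks of the same power-of-two lengths, and the $i$-th coordinate of each of $\sk(S_1)$ and $\sk(S_2)$ depends only on the random vectors with superscript $i$, which are independent across $i$ by construction. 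Thus $D$ is a sum of independent $0/1$ variables, and Lemma~\ref{lm:success_prob} bounds $\Prob[Z_i=1]$ by $\delta_1$ from above when $d_H(S_1,S_2)\le k$, and by $\delta_2$ from below when $d_H(S_1,S_2)\ge(1+\eps)k$; hence $\mathbb{E}[D]\le\delta_1\lambda$ in the first case and $\mathbb{E}[D]\ge\delta_2\lambda$ in the second.

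Next I would split according to $d_H(S_1,S_2)$, using that $\Delta=\tfrac{\delta_1+\delta_2}{2}\lambda=\delta_1\lambda+\gamma\lambda=\delta_2\lambda-\gamma\lambda$. If $k<d_H(S_1,S_2)<(1+\eps)k$, then both conclusions ($d_H(S_1,S_2)>k$ and $d_H(S_1,S_2)<(1+\eps)k$) already hold, so there is nothing to prove. If $d_H(S_1,S_2)\le k$, then (as $\eps k>0$) the second implication holds unconditionally, and the first fails only when $D>\Delta$, i.e.\ when $D\ge\mathbb{E}[D]+\gamma\lambda$ since $\mathbb{E}[D]\le\delta_1\lambda=\Delta-\gamma\lambda$. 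Symmetrically, if $d_H(S_1,S_2)\ge(1+\eps)k$, the first implication is unconditional and the second fails only when $D\le\Delta$, i.e.\ when $D\le\mathbb{E}[D]-\gamma\lambda$ since $\mathbb{E}[D]\ge\delta_2\lambda=\Delta+\gamma\lambda$. So in every case the failure event is contained in a one-sided deviation $\{D\ge\mathbb{E}[D]+\gamma\lambda\}$ or $\{D\le\mathbb{E}[D]-\gamma\lambda\}$.

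Finally, I would apply Hoeffding's inequality to the sum $D$ of $\lambda$ independent variables taking values in $[0,1]$: each of $\Prob[D\ge\mathbb{E}[D]+\gamma\lambda]$ and $\Prob[D\le\mathbb{E}[D]-\gamma\lambda]$ is at most $e^{-2\gamma^2\lambda}$, and substituting $\lambda=\lceil 3\ln n/\gamma^2\rceil$ gives $e^{-2\gamma^2\lambda}\le e^{-6\ln n}=n^{-6}$, which is exactly the desired bound. I do not expect a real obstacle in this argument; the points that need care are the mutual independence of the coordinates $Z_i$, the observation that the middle regime $k<d_H(S_1,S_2)<(1+\eps)k$ is covered vacuously, and the bookkeeping showing that $\gamma$ and $\lambda$ were chosen precisely so that the gap $\gamma\lambda$ between $\Delta$ and $\delta_1\lambda$ (resp.\ $\delta_2\lambda$) contributes the factor $3\ln n$ to the exponent. (A minor point worth noting: the $Z_i$ have success probabilities only bounded by, not equal to, $\delta_1$ and $\delta_2$; this is exactly why it is convenient to state the tail bound in terms of $\mathbb{E}[D]$ and then use $\mathbb{E}[D]\le\delta_1\lambda$ and $\mathbb{E}[D]\ge\delta_2\lambda$.)
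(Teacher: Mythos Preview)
Your proposal is correct and follows essentially the same approach as the paper: define the coordinate indicators, invoke Lemma~\ref{lm:success_prob} to bound their common success probability, and apply the Hoeffding bound $e^{-2\lambda\gamma^2}\le n^{-6}$ to control the two one-sided deviations around $\Delta$. Your treatment is slightly more explicit in arguing independence of the $Z_i$ and in disposing of the intermediate range $k<d_H(S_1,S_2)<(1+\eps)k$, but the substance is the same.
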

\begin{proof}
Let $\chi_i$ be an indicator random variable that is equal to one if and only if $\sk (S_1) [i] \ne \sk (S_2) [i]$. The claim follows immediately from Lemma~\ref{lm:success_prob} and the following Chernoff--Hoeffding bounds~\cite[Theorem 1]{MR0144363}. For $\lambda$ independently and identically distributed binary variables $\chi_1, \chi_2, \ldots, \chi_\lambda$, we have

  $$\Prob \left[\frac{1}{\lambda}\sum_{i=1}^\lambda \chi_i > \mu+\gamma\right] \le e^{-2 \lambda \gamma^2}\quad\quad\text{and}\quad\quad \Prob \left[\frac{1}{\lambda}\sum_{i=1}^\lambda \chi_i \le \mu-\gamma\right] \le e^{-2\lambda \gamma^2},$$
where $\mu = \Prob [\chi_i = 1]$. Recall that $\gamma = \frac{\delta_2 - \delta_1}{2}$, so we obtain that the error probability is at most 
$e^{-2 \lambda \gamma^2} \le n^{-6}$. 

If $d_H(S_1,S_2) \le k$, Lemma~\ref{lm:success_prob} asserts that $\mu \le \delta_1$. By the first of the above inequalities, we have that $d_H(\sk (S_1),\sk (S_2)) \le \Delta$ with probability at least $1-n^{-6}$. Hence, if $d_H(\sk (S_1),\sk (S_2)) > \Delta$, then $d_H(S_1,S_2) > k$ with the same probability.

If $d_H(S_1,S_2) \ge (1+\eps) \cdot k$, Lemma~\ref{lm:success_prob} asserts that $\mu \ge \delta_2$. By the second inequality, we have that $d_H(\sk (S_1),\sk (S_2)) > \Delta$ with probability at least $1-n^{-6}$. Hence, if $d_H(\sk (S_1),\sk (S_2)) \le \Delta$, then $d_H(S_1,S_2) < (1+\eps)\cdot k$ with the same probability.
\end{proof}

Suppose we wish to answer an $\lcp$ query on two suffixes $S_1, S_2$. It suffices to find the longest prefixes of $S_1, S_2$ such that the Hamming distance between their sketches is at most~$\Delta$. As mentioned above, these prefixes can be represented uniquely as a concatenation of strings of power-of-two lengths $\ell_1 > \ell_2 > \ldots > \ell_g$. To compute $\ell_1$, we initialise it with the biggest power of two not exceeding $n$ and compute the Hamming distance between the sketches of the corresponding substrings. If it does not exceed $\Delta$, we have found $\ell_1$; otherwise, we divide $\ell_1$ by two and continue. Suppose that we already know $\ell_1, \ell_2, \ldots, \ell_i$ and the sketches $\sk(S_1[1,d_i])$ and $\sk(S_2[1,d_i])$,
where $d_i = \ell_1+\cdots + \ell_i$. To determine $\ell_{i+1}$, we initialise it with $\frac12\ell_i$ and then divide it by two until $\HD(\sk(S_1[1,d_i+\ell_{i+1}]),\sk(S_2[1,d_i+\ell_{i+1}]))\le \Delta$. These two sketches can be computed in $\Oh(\lambda)=\Oh(\log n)$ time by combining $\sk(S_1[1,d_i])$ and $\sk(S_2[1,d_i])$ with the precomputed sketches
$\sk(S_1[d_i+1,d_i+\ell_{i+1}])$ and $\sk(S_2[d_i+1,d_i+\ell_{i+1}])$, respectively.
Consequently, the query procedure takes $\Oh(\log^2 n)$ time.
It errs on at least one query with probability at most $n^{-3}$ (Lemma~\ref{lm:klcperr} is only applied for pairs of same-length substrings of $T_1$ and $T_2$, so we estimate error probability by the union bound). This completes the proof of Lemma~\ref{lm:lcp_k}.

\subsection{Proof of Theorem~\ref{th:main}}	
We start by preprocessing $T_1$ and $T_2$ as described in Lemma~\ref{lm:lcp_k}. 
In the main phase of the algorithm, we construct a family $\Hashes$ of hash functions
based on four parameters $m,s,t,w\in \mathbb{Z}$ to be specified later.

Let~$\Projections$ be the set of all projections of strings of length $n$ onto a single position, i.e.\ the value $\pi_i(S)$ of the $i$-th projection on a string $S$ is simply its $i$-th character $S[i]$.
More generally, for a string $S$ of length $n$ and a function $h=(\pi_{a_1},\ldots,\pi_{a_q}) \in \Projections^q$, we define $h(S)$ as $S[a_{p_1}] S[a_{p_2}] \cdots S[a_{p_q}]$,
where $p$ is a permutation such that $a_{p_1}\le \cdots \le a_{p_q}$.
If $|S|<n$, we define $h(S) := h(S \cdot \$^{n-|S|})$, where $ \$\notin \Sigma$ is a special gap-filling character.

Each hash function $h\in \Hashes$ is going to be a uniformly random element of $\Projections^{mt}$; however, the individual hash functions are \emph{not} chosen independently
in order to ensure faster running time for the algorithm.
Nevertheless, $\Hashes$ will be composed of $s$ independent subfamilies $\Hashes_i$, each of size $\binom{w}{t}$.
To construct $\Hashes_i$, we choose $w$ functions $u_{i,1},\ldots,u_{i,w}\in \Projections^m$ independently and uniformly at random.
Each hash function $h\in \Hashes_i$ is defined as an unordered $t$-tuple of distinct functions $u_{i,r}$.
Formally, $$\Hashes_i = \{  (u_{i,r_1}, u_{i,r_2}, \ldots, u_{i,r_t}) \in \Projections^{mt} : 1 \le r_1 < r_2 < \cdots < r_t \le w\}.$$

Consider the set of all suffixes $S_1, S_2, \ldots, S_{2n}$ of $T_1$ and $T_2$.
For each $h\in \Hashes$, we define an ordering $\prec_h$ of the suffixes $S_1,\ldots,S_{2n}$ according to the lexicographic order of the values $h(S_j)$ of the hash function and, in case of ties, according to the lengths $|S_j|$.
To construct it, we build a compact trie\footnote{Recall that a compact trie stores only explicit nodes, that is, the root, the leaves, and nodes with at least two children. Its size is linear in the number of strings that are stored. Henceforth we call a compact trie simply a trie.} on strings $h(S_1), h(S_2), \ldots, h(S_{2n})$.

\begin{theorem}\label{th:sort}
Functions $u_{i,r}$ for $i=1,\ldots,s$ and $r=1,\ldots,w$ can be preprocessed in  $\Oh(n^{4/3}\log ^{4/3} n)$ time and $\Oh(n)$ space
each, i.e., in $\Oh(sw n^{4/3}\log ^{4/3} n)$ time and $\Oh(sw n)$ space in total, so that afterwards, for each $h\in \Hashes$, a trie on $h(S_1), \ldots, h(S_{2n})$ can be constructed in $\Oh(tn\log n)$ time and $\Oh(n)$ space.
 The preprocessing errs with probability $\Oh(1/n)$ for each $u_{i,r}$, i.e., $\Oh(sw/n)$ in total.
\end{theorem}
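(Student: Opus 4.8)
The goal is to show each projection-tuple function $u_{i,r}\in\Projections^m$ can be "preprocessed" so that, afterwards, evaluating $h=(u_{i,r_1},\dots,u_{i,r_t})$ on all $2n$ suffixes and sorting the results takes only $\Oh(tn\log n)$ time, rather than the naive $\Oh(mtn)$ needed to read off all $mt$ characters for each of the $2n$ suffixes. The key realisation is that $u_{i,r}(S_j)$ is just a length-$m$ string obtained by concatenating $m$ single characters of $T_1$ or $T_2$ at fixed offsets from position $j$ (sorted by offset, with $\$$ padding past the end). So for a fixed $i,r$, the collection $\{u_{i,r}(S_j)\}_{j}$ is a family of $2n$ length-$m$ strings drawn from a text of length $\Oh(n)$, and I want a data structure that lets me compare any two of them (and extract their longest common prefix) quickly — ideally in $\Oh(\log n)$ or $\Oh(1)$ time after preprocessing.

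The plan is as follows. First, for each $u_{i,r}$ form the "virtual text'' $T_1^{(i,r)}$ of length $\Oh(n)$ whose $j$-th character is the character of $T_1$ selected by applying the (sorted) offset pattern of $u_{i,r}$ — more precisely, think of $u_{i,r}(S_j)$ as a contiguous window: after fixing the sorted offsets $a_{p_1}\le\cdots\le a_{p_m}$, the string $u_{i,r}(S_j)$ is a length-$m$ factor of an auxiliary string formed by interleaving shifted copies of $T_1$ (and similarly $T_2$), so that consecutive suffixes $S_j,S_{j+1}$ produce windows that overlap in all but a controlled number of characters. Then build a suffix tree / suffix array of this auxiliary string together with an LCA (equivalently, an $\Oh(1)$-time longest-common-prefix) data structure; this is the $\Oh(n^{4/3}\log^{4/3}n)$-time, $\Oh(n)$-space preprocessing step (the exponent $4/3$ will come from a radix-sort / bucketing argument on the $\Oh(n)$ strings of length up to $m=\Theta(n^{1/3}\log^{1/3}n)$, which is why $m$ must be chosen that small). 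Given two functions $u_{i,r_1},u_{i,r_2}$ one reduces comparison of $h(S_j)$ with $h(S_{j'})$ to a logarithmic number of single-character / LCP comparisons inside the preprocessed structures. Finally, to build the trie on $h(S_1),\dots,h(S_{2n})$ for a given $h\in\Hashes_i$: sort the $2n$ strings by any comparison-based sort ($\Oh(n\log n)$ comparisons), where each comparison costs $\Oh(t)$ (one $\Oh(1)$- or $\Oh(\log n)$-amortised LCP query per coordinate-block $u_{i,r_\ell}$), total $\Oh(tn\log n)$; reading off adjacent-pair LCPs then gives the compact trie in $\Oh(n)$ additional space and time.

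I also need to account for the randomness. The preprocessing "errs'' because the suffix-array/LCP construction (or the hashing used to fit it into subquadratic time for length-$m$ strings) is Monte Carlo, so I would state that for each $u_{i,r}$ the structure is built correctly except with probability $\Oh(1/n)$, and the $\Oh(sw/n)$ bound follows by a union bound over the $sw$ functions. One subtlety to flag: $\Hashes_i$ contains $\binom{w}{t}$ hash functions all built from the same $w$ functions $u_{i,1},\dots,u_{i,w}$, so the preprocessing cost is paid only $sw$ times, not once per hash function — this sharing is exactly the reason the subfamilies $\Hashes_i$ were defined through a common pool of $w$ seeds rather than independently, and the theorem statement must make this explicit.

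**The main obstacle.** The crux is getting the $\Oh(n^{4/3}\log^{4/3}n)$ bound for the per-seed preprocessing. Naively, the $2n$ strings $u_{i,r}(S_j)$ have total length $\Theta(nm)=\Theta(n^{4/3}\log^{1/3}n)$, so even writing them down costs that much; the extra $\log n$ factor and the need to \emph{sort / build a suffix structure} on them is where the work lies. The right approach is to observe that the offsets $a_1<\dots<a_m$ used by $u_{i,r}$ partition positions into $m$ arithmetic-progression-like classes, so the $2n$ length-$m$ strings are the length-$m$ windows of $\Oh(m)$ shifted copies of $T_1$ and $T_2$; hence they form $\Oh(m)$ \emph{runs} of consecutive (overlapping) windows, and within each run consecutive windows share all but their two endpoints. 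Sorting such structured strings (or building the generalized suffix array over the concatenation of the $\Oh(m)$ shifted texts, each of length $\Oh(n)$) can be done with an integer-alphabet suffix-array construction in $\Oh(nm)=\Oh(n^{4/3}\log^{1/3}n)$ time, and one more $\log n$-ish factor appears from the radix steps needed to handle length up to $m$ — giving the claimed $\Oh(n^{4/3}\log^{4/3}n)$. Verifying that the windows really do overlap in the claimed way (and that padding with $\$$ past the end of a suffix does not break the run structure) is the delicate combinatorial point I would need to check carefully.
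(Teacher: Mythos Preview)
Your high-level reduction---preprocess each $u_{i,r}$ so that the longest common prefix of $u_{i,r}(S_j)$ and $u_{i,r}(S_{j'})$ can be found in $\Oh(1)$ time, then compare $h(S_j)$ and $h(S_{j'})$ block by block in $\Oh(t)$ time and merge-sort into a trie in $\Oh(tn\log n)$---is exactly what the paper does. The gap is entirely in the per-seed preprocessing.

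The central error is the assumption that $m=\Theta(n^{1/3}\log^{1/3}n)$. The parameter $m=\lceil \frac{1}{t}\log_{p_2}\frac{1}{n}\rceil$ is \emph{not} free; it is fixed by the LSH collision-probability requirements and depends on $k$. Since $-\log p_2 = \Theta(k/n)$, one has $m=\Theta\!\big(\frac{n\sqrt{\log n}}{k}\big)$, which for $k=\Oh(1)$ is $\tilde\Theta(n)$. Your auxiliary text of $m$ shifted copies therefore has length $\Theta(mn)$, potentially $\tilde\Theta(n^2)$, and building any suffix structure on it is hopeless. The related claim that ``consecutive windows share all but their two endpoints'' is also incorrect: in the interleaved text, $u_{i,r}(S_j)$ and $u_{i,r}(S_{j+1})$ occupy \emph{disjoint} length-$m$ blocks, and the offsets $a_1,\dots,a_m$ are uniformly random positions, not an arithmetic progression.

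The paper obtains $\Oh(n^{4/3}\log^{4/3}n)$ by balancing two completely different algorithms, neither of which materialises the $2n$ length-$m$ strings. For small $m$ it builds the trie on $u_{i,r}(S_1),\dots,u_{i,r}(S_{2n})$ in $\sqrt{m}$ layers of $\sqrt{m}$ coordinates each: within a layer it computes Karp--Rabin fingerprints of all $2n$ projections simultaneously via a single FFT over $\mathbb{Z}_p$, sorts by fingerprint, and only compares character-by-character among strings with distinct fingerprints; this gives $\Oh(\sqrt{m}\,n\log n)$ time and is the source of the Monte-Carlo error. For large $m$ it never touches the projected strings at all: it builds a suffix tree of $T_1\$^nT_2\$^n$ and answers ``first position where $u_{i,r}(S_j)$ and $u_{i,r}(S_{j'})$ differ'' by kangaroo-jumping over the first $\ell=\Theta(n\log n/m)$ mismatch positions of $S_j$ versus $S_{j'}$ in the \emph{original} text, then checking whether any of those positions lies in the random multiset $P=\{a_1,\dots,a_m\}$. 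Because the $a_q$ are independent uniform positions, the probability that none of $\ell$ fixed positions is sampled is $(1-\ell/n)^m\le n^{-3}$, so with high probability the first sampled mismatch is among those $\ell$. This yields $\Oh(n\log n/m)$ per comparison and $\Oh(n^2\log^2 n/m)$ overall. Taking the minimum of the two bounds gives $\Oh(\min\{\sqrt{m},\,n\log n/m\}\cdot n\log n)=\Oh(n^{4/3}\log^{4/3}n)$, with the balance at $m=(n\log n)^{2/3}$. The randomness of the $a_q$ is essential for the large-$m$ regime; a deterministic argument of the kind you sketch cannot work there.
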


Let us defer the proof of the theorem until we complete the description of the algorithm and derive Theorem~\ref{th:main}. 
We preprocess functions $u_{i,r}$ and build a trie on $h(S_1), \dots, h(S_{2n})$ for each $h\in \Hashes_i$. 
We then augment the trie with an LCA data structure, which can be done in linear time and space~\cite{DBLP:journals/siamcomp/FischerH11,DBLP:journals/siamcomp/HarelT84}. 
The latter can be used to find in constant time the longest common prefix of any two strings $h(S_j)$ and $h(S_{j'})$.

Consider a function $h\in\Hashes$ and a positive integer $\ell\le n$.
We define $\restr{h}{[\ell]}$ so that
$$\restr{h}{[\ell]}(S) =\begin{cases}
h(S[1,\ell]) & \text{if }|S|\ge \ell,\\
h(S) & \text {otherwise.}
\end{cases}$$
In other words, if $h$ is a projection onto positions from a multiset $\Pos$, then $\restr{h}{[\ell]}$
is a projection onto positions from the multiset $\{p\in \Pos : p \le \ell\}$, extended with $ \$$'s to length $mt$.
Consequently, $\restr{h}{[\ell]}(S)=\restr{h}{[\ell]}(S')$ if and only if 
the longest common prefix of $h(S)$ and $h(S')$
is at least $|\{p\in \Pos : p \le \ell\}|$ characters long.

We define the family of \emph{collisions} $\Collisions^{\Hashes}_\ell$ as a set of triples $(S,S',h)$ such that $S$ and $S'$
are suffixes of $T_1$ and $T_2$, respectively, both of length at least $\ell$, and
$h\in \Hashes$ is such that the suffixes collide on $\restr{h}{[\ell]}$, that is, $\restr{h}{[\ell]}(S)=\restr{h}{[\ell]}(S')$.
Note that the families of collisions are nested: $\Collisions^{\Hashes}_{0}\supseteq \cdots \supseteq \Collisions^{\Hashes}_{\ell}\supseteq \Collisions^{\Hashes}_{\ell+1}\supseteq\cdots \supseteq \Collisions^{\Hashes}_{n}$.

\begin{algorithm}[ht]
\caption{Longest common substring with approximately $k$ mismatches.}
\begin{algorithmic}[1]
\State Preprocess $T_1, T_2$ for $\lcp$ queries

\For {$i = 1, 2, \ldots , s$}
	\For {$r = 1, 2, \ldots , w$}
		\State Choose a function $u_{i,r} \in \Projections^m$ uniformly at random and preprocess it using Theorem~\ref{th:sort}
	\EndFor
	\ForAll {$h = (u_{i,r_1}, u_{i,r_2}, \ldots, u_{i,r_t})$} 
		\State Build a trie on $h(S_1), h(S_2), \ldots, h(S_{2n})$ and augment it with an LCA data structure
		\State Add $h$ to $\Hashes$
	\EndFor	
\EndFor
		\State Find the largest $\ell$ such $|\Collisions_\ell^{\Hashes}|\ge 2n|\Hashes|$
		\ForAll {$(S,S',h)\in \Collisions_{\ell+1}^{\Hashes}$}
			 \State Compute $\lcp (S, S')$ and update the answer
		\EndFor
		\State Pick $(\bar{S},\bar{S}',\bar{h})\in \Collisions_{\ell}^{\Hashes}$ uniformly at random
		\State Compute $\lcpe (\bar{S}, \bar{S}')$ and update the answer 
\end{algorithmic}
\label{alg:LSH}
\end{algorithm}

For a fixed function $h$, we define the \emph{$\ell$-neighbourhood} of $S$ as the set of suffixes $S'$ of $T_2$
such that $(S,S',h)\in \Collisions^{\Hashes}_\ell$.
We observe that the $\ell$-neighbourhood of $S$ forms a contiguous range in the sequence of suffixes of $T_2$
ordered according to $\prec_h$, and this range can be identified in $\Oh(\log n)$ time using binary search and LCA queries on the trie constructed for $h$.
Consequently, an $\Oh(n|\Hashes|)$-space representation of $\Collisions^{\Hashes}_\ell$,
with one range for every $\ell$-neighbourhood of each suffix $S$, can be constructed in $\Oh(n|\Hashes|\log n)$ time.

In the algorithm, we find the largest $\ell$ such that $|\Collisions^{\Hashes}_{\ell}|\ge 2n|\Hashes|$; using a binary search,
this takes $\Oh(n|\Hashes|\log^2 n)$ time.
For each $(S,S',h)\in \Collisions^{\Hashes}_{\ell+1}$, we compute the longest common prefix with approximately $k$ mismatches $\lcp(S,S')$ (Lemma~\ref{lm:lcp_k}).
Additionally, we pick a single element $(\bar{S},\bar{S}',\bar{h})\in \Collisions^{\Hashes}_{\ell}$ uniformly at random
and compute the longest common prefix with at most $(1+\eps)k$ mismatches $\lcpe(\bar{S},\bar{S}')$ naively in $\Oh(n)$ time.
The longest of the retrieved prefixes is returned as an answer. 

See Algorithm~\ref{alg:LSH} for pseudocode. We will now proceed to the analysis of complexity and correctness of the algorithm.

\subsection{Complexity and correctness} 
To ensure the complexity bounds and correctness of the algorithm, we must carefully choose the parameters $s$, $t$, $w$, and $m$. Let $p_1 = 1 - k / n$, $p_2 = 1 - (1+\eps) \cdot k / n$,
and $\rho =\log p_1 / \log p_2$. The intuition behind these values is that if $S$ and $S'$ are two strings of length $n$ and $\HD(S,S')\le k$, then 
$p_1$ is a lower bound for the probability of $S[i]=S'[i]$ for a uniformly random position $i$. On the other hand, $p_2$ is an upper bound for the same probability
if $\HD(S,S')\ge (1+\eps)\cdot k$. Based on these values, we define
$$t = \big\lceil \sqrt{\log n} \, \big\rceil, \quad m = \left \lceil \tfrac{1}{t} \log_{p_2}{\tfrac{1}{n}} \right \rceil,
\quad w = t^2 + \lceil p_1^{-m} \rceil,\;\text{ and }\; s=\Theta(t!).$$
We assume that $(1+\eps)k<n$ in order to guarantee $p_1>p_2>0$. Note that if $(1+\eps)k \ge n$, the problem is trivial.

\subsubsection{Complexity} To show the complexity of the algorithm, we will start with a simple observation
and a more involved fact.

\begin{observation}\label{obs:ws}
We have $s = n^{o(1)}$ and $w=n^{o(1)}$.
\end{observation}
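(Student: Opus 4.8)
The plan is to prove Observation~\ref{obs:ws} by establishing separately that $t! = n^{o(1)}$ and $p_1^{-m} = n^{o(1)}$, since $s = \Theta(t!)$ and $w = t^2 + \lceil p_1^{-m}\rceil$ with $t^2 = \Oh(\log n) = n^{o(1)}$ already clear.

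First I would bound $s$. Since $t = \lceil \sqrt{\log n}\,\rceil$, we have $t = \Oh(\sqrt{\log n})$, so $\log(t!) \le t\log t = \Oh(\sqrt{\log n}\cdot \log\log n) = o(\log n)$, whence $s = \Theta(t!) = 2^{o(\log n)} = n^{o(1)}$.

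Next I would bound $w$, which reduces to bounding $p_1^{-m}$. The key identity is that $m \approx \frac{1}{t}\log_{p_2}\frac1n$, so $p_2^{mt} \approx \frac1n$, and therefore $p_1^{mt} = (p_2^{mt})^{\rho} \approx n^{-\rho}$ where $\rho = \log p_1/\log p_2 \in (0,1)$ because $0 < p_2 < p_1 < 1$. Thus $p_1^m \approx n^{-\rho/t}$, i.e.\ $p_1^{-m} \approx n^{\rho/t}$, and since $t \to \infty$ this exponent tends to $0$, giving $p_1^{-m} = n^{o(1)}$. The only care needed is with the ceiling in the definition of $m$: I would write $m = \frac1t\log_{p_2}\frac1n + \theta$ for some $\theta\in[0,1)$, so that $p_1^{-m} = p_1^{-\frac1t\log_{p_2}\frac1n}\cdot p_1^{-\theta} \le n^{\rho/t}\cdot p_1^{-1}$; since $p_1 = 1 - k/n \ge \frac1n$ (as $k < n$), the factor $p_1^{-1} \le n$, which is not $n^{o(1)}$ on its own, so I instead keep it as $p_1^{-1}$ and observe $p_1^{-1} = n/(n-k) \le n$ — wait, that is too crude. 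The cleaner route: $p_1 \ge p_2$ and more usefully $p_1 = 1 - k/n$; when $k \le n/2$ we get $p_1 \ge 1/2$ so $p_1^{-\theta} \le 2$; when $k > n/2$ one checks $\rho$ is bounded away from $1$ in a way that still yields the bound, or one simply notes $p_1^{-m}$ is an integer-ish quantity handled by the $\lceil\cdot\rceil$ in $w$ and $n^{\rho/t} = n^{o(1)}$ dominates. I would present it as: $p_1^{-m} \le p_1^{-1}\cdot p_1^{-\frac1t\log_{p_2}\frac1n} = p_1^{-1}\cdot n^{\rho/t} = n^{o(1)}\cdot n^{o(1)}$, using that $p_1^{-1} \le n$ gives $\log(p_1^{-1}) \le \log n$ but combined with...

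The main obstacle is exactly this: handling the rounding term $p_1^{-\theta}$ cleanly, because the naive bound $p_1^{-1}\le n$ is $n^{\Omega(1)}$, not $n^{o(1)}$. The fix is to observe that $\rho/t = o(1)$ has slack to absorb a constant-factor, or more precisely to split into cases on the magnitude of $k$: if $k \le n - n^{1/\log\log n}$ (say) then $p_1 = 1-k/n \ge n^{1/\log\log n - 1}$ is not small enough either — so the genuinely clean argument is to bound $\log(p_1^{-m})$ directly: $\log(p_1^{-m}) = m\log(1/p_1)$, and $m\log(1/p_1) = \rho\cdot m\log(1/p_2)$; since $m\log(1/p_2) = \lceil \frac1t\log_{p_2}\frac1n\rceil\cdot\log(1/p_2) = \frac1t\log n + \Oh(\log(1/p_2)) = \frac1t\log n + \Oh(\log n)$, but $\Oh(\log n)$ again spoils it unless $\log(1/p_2) = o(\log n)$, which fails when $(1+\eps)k$ is close to $n$. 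The resolution used in such arguments, and what I would carry out, is to note that $\rho = \log p_1/\log p_2 \le 1$ and in fact $m\log(1/p_1) \le \frac1t\log n + \log(1/p_1) \le \frac1t\log n + \log(1/p_2)$, and then bound $\rho\log(1/p_2) = \log(1/p_1) \le \log(1/p_2)$; the clean inequality $\log(p_1^{-m}) \le \frac{\rho}{t}\log n + \log(1/p_1)$ combined with $\rho\le 1$ and the elementary fact $\log(1/p_1) = -\log(1-k/n) $ — this is still $\Theta(\log n)$ in the worst case. Therefore the honest statement is that $w = n^{o(1)}$ requires $k \le n - \omega(1)$ type control, OR the intended reading is that $\lceil p_1^{-m}\rceil$ with the ceiling and the factor structure works out because $\frac1t\log_{p_2}\frac1n$ is the dominant term and adding the ceiling's $\le 1$ changes $p_1^{-m}$ by only a factor $p_1^{-1} \le 2$ in the regime $k \le n/2$, with the regime $k > n/2$ handled by the problem being effectively trivial or by a direct check that $\rho$ is then bounded by a constant $<1$ making even a polynomial-in-$n$ slack harmless; I would spell out the $k\le n/2$ case fully and dispatch $k > n/2$ by the triviality remark preceding the observation (``if $(1+\eps)k\ge n$ the problem is trivial'') extended to note $p_1^{-1}\le 2$ whenever $k \le n/2$ and otherwise $p_1^{-m} \le n^{\Oh(1/t)}$ still holds after a short computation.
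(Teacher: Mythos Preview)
Your argument for $s$ is fine and matches the paper. For $w$, you also have the right overall idea---show $p_1^{-m}=n^{o(1)}$ via $p_1^{-m}\le p_1^{-1}\cdot n^{\rho/t}$ with $\rho<1$ and $t\to\infty$---and this is exactly what the paper does. But you get tangled up on the rounding factor $p_1^{-\theta}\le p_1^{-1}$ and never resolve it: your case split $k\le n/2$ versus $k>n/2$ leaves the second case unfinished (the triviality remark only covers $(1+\eps)k\ge n$, which for small $\eps$ is far stronger than $k>n/2$), and the phrase ``still holds after a short computation'' is not a proof.

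The missing observation is one line. The standing assumption just before the Observation is $(1+\eps)k<n$ with $\eps$ a fixed constant, hence
\[
p_1 \;=\; 1-\tfrac{k}{n} \;>\; 1-\tfrac{1}{1+\eps} \;=\; \tfrac{\eps}{1+\eps},
\]
so $p_1^{-1}<(1+\eps)/\eps=\Oh(1)$. This kills the rounding term outright: $p_1^{-m}\le p_1^{-1}\cdot n^{\rho/t}=\Oh(n^{\rho/t})=2^{\Oh(\rho\sqrt{\log n})}=2^{\Oh(\sqrt{\log n})}=n^{o(1)}$, and $w=t^2+\lceil p_1^{-m}\rceil=\Oh(\log n)+n^{o(1)}=n^{o(1)}$. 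The paper's proof is essentially this computation; it absorbs the ceiling into the big-$\Oh$ in the exponent without comment, which is justified precisely because $p_1^{-1}=\Oh(1)$.
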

\begin{proof}
First, observe
$$s = \Oh(t!) = 2^{\Oh(t \log t)}=2^{\Oh(\sqrt{\log n} \log \log n)}=2^{o(\log n)}=n^{o(1)}.$$
Similarly,
$$w = t^2 + \lceil{p_1^{-m}}\rceil \le t^2 + 1 + p_1^{-m}  = \Oh(\log n)+p_1^{-\Oh(\frac{1}{t}\log_{p_2}\frac1n)}
= \Oh(\log n) + 2^{\Oh(\rho \sqrt{\log n})}.$$
Moreover, $p_1 > p_2$ yields $\log p_1 > \log p_2$
and therefore $\rho = \frac{\log p_1}{\log p_2} < 1$.
Consequently, $w = \Oh(\log n) + 2^{\Oh(\sqrt{\log n})} = n^{o(1)}$,
which concludes the proof.
\end{proof}

\begin{fact}\label{ft:h}
We have $|\Hashes| = \Oh(n^{1/(1+\eps)})$.
\end{fact}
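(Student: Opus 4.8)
The plan is to bound $|\Hashes| = s \cdot \binom{w}{t}$ by controlling each factor. By Observation~\ref{obs:ws}, we already know $s = n^{o(1)}$, so the entire contribution of $s$ is swallowed by the $n^{o(1)}$ slack. The real work is to show $\binom{w}{t} = n^{1/(1+\eps)+o(1)}$. First I would use the crude bound $\binom{w}{t} \le w^t$, so it suffices to prove $w^t = n^{1/(1+\eps)+o(1)}$, i.e.\ $t \log w \le \tfrac{1}{1+\eps}\log n + o(\log n)$.

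The key quantity to estimate is $w = t^2 + \lceil p_1^{-m}\rceil$. Since $t^2 = \Oh(\log n) = n^{o(1)}$ is negligible (once raised to the power $t$ it contributes $2^{\Oh(\sqrt{\log n}\log\log n)} = n^{o(1)}$), the dominant term is $p_1^{-m}$, and thus $w^t$ is, up to an $n^{o(1)}$ factor, equal to $p_1^{-mt}$. Now I plug in $m = \lceil \tfrac1t \log_{p_2}\tfrac1n\rceil$, so $mt = \log_{p_2}\tfrac1n + \Oh(t)$, hence
$$p_1^{-mt} = p_1^{-\log_{p_2}(1/n)} \cdot p_1^{-\Oh(t)}.$$
The factor $p_1^{-\Oh(t)}$ is $n^{o(1)}$ because $p_1 \in (0,1)$ is bounded away from $0$ (we assumed $(1+\eps)k < n$, and in the nontrivial regime $k$ is not too close to $n$; more carefully one notes $p_1^{-t} = 2^{\Oh(t\log(1/p_1))}$ and $\log(1/p_1) = \Oh(\log n)$ at worst, giving $2^{\Oh(\sqrt{\log n}\cdot\log\log n)}$ — here one should double-check the edge case where $k/n$ is polynomially small, but then $p_1$ is close to $1$ and $\log(1/p_1)$ is tiny, so this is fine). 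The main term simplifies by the standard identity: $p_1^{-\log_{p_2}(1/n)} = \bigl(\tfrac1n\bigr)^{-\log_{p_2} p_1} = n^{\log p_1/\log p_2} = n^{\rho}$.

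So the whole bound collapses to $|\Hashes| = n^{\rho + o(1)}$, and it remains to show $\rho = \tfrac{\log p_1}{\log p_2} \le \tfrac{1}{1+\eps}$. This is the one genuinely non-routine step. With $p_1 = 1 - k/n$ and $p_2 = 1 - (1+\eps)k/n$, write $x = k/n \in (0,1)$; we must check $\tfrac{\log(1-x)}{\log(1-(1+\eps)x)} \le \tfrac{1}{1+\eps}$, equivalently (since both logs are negative) $(1+\eps)\log(1-x) \ge \log(1-(1+\eps)x)$, i.e.\ $(1-x)^{1+\eps} \ge 1-(1+\eps)x$. This is exactly Bernoulli's inequality $(1+y)^r \ge 1 + ry$ with $y = -x$ and $r = 1+\eps \ge 1$, which holds for $y \ge -1$. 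Hence $\rho \le \tfrac{1}{1+\eps}$, and combining with the $n^{o(1)}$ factors from $s$, from $t^2$, and from the ceilings/rounding gives $|\Hashes| = \Oh(n^{1/(1+\eps)})$ once we absorb the $n^{o(1)}$ into the $\Oh(\cdot)$ — or, to be fully precise, one states $|\Hashes| = n^{1/(1+\eps)+o(1)}$, which is what is actually used in the running-time bounds of Theorem~\ref{th:main}. I expect the Bernoulli-inequality step bounding $\rho$ to be the crux; everything else is bookkeeping of $n^{o(1)}$ error terms.
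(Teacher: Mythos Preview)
Your outline correctly establishes $|\Hashes| = n^{1/(1+\eps)+o(1)}$, but this is \emph{strictly weaker} than the stated Fact, which asserts $|\Hashes| = \Oh(n^{1/(1+\eps)})$ with no $n^{o(1)}$ slack. Your closing remark that the weaker bound ``is what is actually used in the running-time bounds of Theorem~\ref{th:main}'' is not right: Theorem~\ref{th:main} claims time $\Oh(n^{1+1/(1+\eps)}\log^2 n)$, and Lemma~\ref{lm:time} derives this directly from $|\Hashes|=\Oh(n^{1/(1+\eps)})$. With only $|\Hashes|=n^{1/(1+\eps)+o(1)}$ you would get $n^{1+1/(1+\eps)+o(1)}$ time, losing the $\log^2 n$ precision.

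The paper avoids every $n^{o(1)}$ loss you incur. First, since $s=\Theta(t!)$ by definition, one has $s\binom{w}{t}=\Oh(t!\binom{w}{t})=\Oh(w^t)$ with only an $\Oh(1)$ constant, so there is no need to invoke $s=n^{o(1)}$. Second, rather than crudely splitting $w=t^2+\lceil p_1^{-m}\rceil$, the paper does a case split: if $w\le 3t^3$ then $w^t=n^{o(1)}$ outright; otherwise $p_1^{-m}\ge t^3+t$, and a short computation gives $w^t\le e\cdot p_1^{-mt}$, again a constant factor. Third, and this is the crux you are missing: Bernoulli's inequality gives only $\rho\le\frac{1}{1+\eps}$, whereas the paper uses the concavity of $x\mapsto \frac{\log(1-x)}{\log(1-(1+\eps)x)}$ (equivalently, a one-term Taylor refinement) to obtain
\[
\rho \;\le\; \tfrac{1}{1+\eps}-\tfrac{\eps}{2(1+\eps)}\tfrac{k}{n}\;=\;\tfrac{1}{1+\eps}-\Theta\bigl(\tfrac{k}{n}\bigr).
\]
This strict gap is exactly what is needed: the ceiling in $m$ contributes $-t\log p_1=\Oh\bigl(\tfrac{k}{n}\sqrt{\log n}\bigr)$ to $\log(p_1^{-mt})$, and the negative $-\Theta\bigl(\tfrac{k}{n}\log n\bigr)$ term from the sharper $\rho$ bound dominates it for large $n$, yielding $p_1^{-mt}\le n^{1/(1+\eps)}$ exactly. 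With only $\rho\le\frac{1}{1+\eps}$, the ceiling correction remains unabsorbed (e.g.\ for $k=\Theta(n)$ it contributes $2^{\Theta(\sqrt{\log n})}$), and you are stuck at $n^{1/(1+\eps)+o(1)}$.

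A minor aside: your parenthetical bounding $p_1^{-\Oh(t)}$ is garbled. Writing ``$\log(1/p_1)=\Oh(\log n)$ at worst, giving $2^{\Oh(\sqrt{\log n}\log\log n)}$'' does not type-check. The clean argument is that $(1+\eps)k<n$ with $\eps$ constant forces $p_1=1-k/n>\eps/(1+\eps)$, so $\log(1/p_1)=\Oh(1)$ and $p_1^{-\Oh(t)}=2^{\Oh(\sqrt{\log n})}=n^{o(1)}$. This is fine for your weaker target, but again does not help reach the $\Oh(n^{1/(1+\eps)})$ actually claimed.
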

\begin{proof}
Observe that $|\Hashes| = s \tbinom{w}{t} = \Oh(t! \tbinom{w}{t}) = \Oh(w^t)$.
To estimate the latter, we consider two cases.
If $w \le 3t^3$,
then $$w^t = \big(3t^3\big)^{t}=2^{\Oh(t\log t)} = 2^{\Oh(\sqrt{\log n}\log \log n)} = 2^{o(\log n)} = n^{o(1)} = \Oh(n^{1/(1+\eps)}).$$
Otherwise,
 $p_1^{-m} \ge w-1-t^2 \ge 3t^3-t^2\ge t^3+t$.
 Consequently,
 $$w^t = (t^2 + \lceil{p_1^{-m}}\rceil)^{t}\le (t^2+1+p_1^{-m})^t = p_1^{-mt}\left(1+\tfrac{t^2+1}{p_1^{-m}}\right)^t\le p_1^{-mt}\left(1+\tfrac{1}{t}\right)^t \le p_1^{-mt}\cdot e.$$
 Thus, it suffices to prove that $p_1^{-mt} = \Oh(n^{1/(1+\eps)})$.
 We have
 $$\log (p_1^{-mt})= -t\left\lceil\tfrac{1}{t}\log_{p_2}\tfrac{1}{n}\right\rceil\log p_1\le (-\log_{p_2}\tfrac{1}{n}-t)\log p_1 =\rho \log n-t\log p_1.$$
 Moreover, due to $(1+\eps)k < n$ and $\eps = \Theta(1)$, we have
 $$-t\log p_1 = -t \log(1-\tfrac{k}{n})=t\log\tfrac{n}{n-k}=\Oh\big(t\tfrac{k}{n-k}\big)=\Oh\big(t\tfrac{1+\eps}{\eps}\tfrac{k}{n}\big)=O\big(\tfrac{k}{n}\sqrt{\log n}\big).$$
 On the other hand, taking the Taylor's expansion of $f(x)=\tfrac{\log (1-x)}{\log (1-(1+\eps)x)}$, which is concave for $0\le x < \frac{1}{1+\eps}$,
 we obtain
 $$\rho = \tfrac{\log \left(1-\tfrac{k}{n}\right)}{\log \left(1-(1+\eps)\tfrac{k}{n}\right)} \le \tfrac{1}{1+\eps}-\tfrac{\eps}{2(\eps+1)}\tfrac{k}{n} = \tfrac{1}{1+\eps} - \Theta(\tfrac{k}{n}).$$
Consequently, 
$$\log (p_1^{-mt}) \le \rho \log n-t\log p_1\le  \tfrac{\log n}{1+\eps} - \Theta\big(\tfrac{k}{n}\log n\big) + O\big(\tfrac{k}{n}\sqrt{\log n}\big)\!=\! \tfrac{\log n}{1+\eps}-\Theta\big(\tfrac{k}{n}\log n\big). $$
Thus, $p_1^{-mt} \le n^{1/(1+\eps)}$ holds for sufficiently large $n$
and therefore
$|\Hashes| = \Oh(w^t) = \Oh(e \cdot p_1^{-mt}) = \Oh(n^{1/(1+\eps)}),$
which concludes the proof.
\end{proof}

\begin{lemma}\label{lm:time}
The running time of the algorithm is $\Oh(n^{1+1/(1+\eps)} \log^2 n)$.
\end{lemma}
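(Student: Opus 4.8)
The plan is to account for the running time of Algorithm~\ref{alg:LSH} line by line, bounding each contribution by $\Oh(n^{1+1/(1+\eps)}\log^2 n)$. First I would recall the two structural facts already established: $|\Hashes| = \Oh(n^{1/(1+\eps)})$ by Fact~\ref{ft:h}, and $s = n^{o(1)}$, $w = n^{o(1)}$ by Observation~\ref{obs:ws}. These immediately control the ``many hash functions'' parts of the algorithm. The $\lcp$-query preprocessing (line~1) costs $\Oh(n\log^3 n)$ by Lemma~\ref{lm:lcp_k}, which is subsumed. Preprocessing the $sw$ functions $u_{i,r}$ via Theorem~\ref{th:sort} costs $\Oh(sw\,n^{4/3}\log^{4/3}n) = n^{4/3+o(1)}$, and since $1+\frac{1}{1+\eps} > \frac{4}{3}$ whenever $\eps < 2$, this is also within budget. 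Building the trie and LCA structure for each $h\in\Hashes$ costs $\Oh(tn\log n)$ per hash function (Theorem~\ref{th:sort}), hence $\Oh(|\Hashes|\,tn\log n) = \Oh(n^{1+1/(1+\eps)}\cdot n^{o(1)}\log n)$ in total — here I would note that the stray $t = \Oh(\sqrt{\log n})$ and $n^{o(1)}$ factors are absorbed into one of the $\log n$ factors, or more carefully, that $t\log n = \Oh(\log^{3/2}n) = \Oh(\log^2 n)$, keeping us at $\Oh(n^{1+1/(1+\eps)}\log^2 n)$.

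Next I would handle the collision-set machinery. As observed in the text, for each $h$ the $\ell$-neighbourhood of a fixed suffix $S$ is a contiguous range under $\prec_h$, locatable in $\Oh(\log n)$ time by binary search plus LCA queries on the trie for $h$; so a range-based representation of $\Collisions^{\Hashes}_\ell$ of size $\Oh(n|\Hashes|)$ is built in $\Oh(n|\Hashes|\log n)$ time. Finding the largest $\ell$ with $|\Collisions^{\Hashes}_\ell|\ge 2n|\Hashes|$ (line~11) is a binary search over $\ell\in\{0,\dots,n\}$, each step recomputing such a representation and summing the range sizes; this is $\Oh(n|\Hashes|\log^2 n) = \Oh(n^{1+1/(1+\eps)}\log^2 n)$, matching the target bound. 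This binary search is the cleanest place where the $\log^2 n$ in the statement genuinely arises.

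The two post-processing loops are where the choice of threshold $\ell$ pays off. By maximality of $\ell$ we have $|\Collisions^{\Hashes}_{\ell+1}| < 2n|\Hashes|$, so the loop over line~12 performs fewer than $2n|\Hashes|$ iterations, each an $\lcp$ query costing $\Oh(\log^2 n)$ by Lemma~\ref{lm:lcp_k}; total $\Oh(n|\Hashes|\log^2 n) = \Oh(n^{1+1/(1+\eps)}\log^2 n)$. The final step (lines~14–15) picks one random triple from $\Collisions^{\Hashes}_\ell$ — which requires knowing $|\Collisions^{\Hashes}_\ell|$ and the per-range counts, already available from line~11 — and runs one $\Oh(n)$-time $\lcpe$ computation via the kangaroo method, a negligible additive $\Oh(n)$. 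Summing all contributions gives the claimed $\Oh(n^{1+1/(1+\eps)}\log^2 n)$.

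The main obstacle I anticipate is purely bookkeeping rather than conceptual: one must verify that every $n^{o(1)}$ factor (from $s$, $w$, $t!$, $\binom{w}{t}$) and every $\Oh(n^{4/3+o(1)})$ term is dominated by $\Oh(n^{1+1/(1+\eps)}\log^2 n)$, which relies on $\eps < 2$ to ensure $1 + \frac{1}{1+\eps} > \frac{4}{3}$ strictly, and on folding $t = \Oh(\sqrt{\log n})$ into the polylogarithmic slack so that the exponent of $\log n$ does not exceed $2$. No single step is hard, but the argument only closes if these lower-order terms are tracked consistently.
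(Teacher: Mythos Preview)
Your proposal is correct and follows essentially the same approach as the paper's proof: both account for each phase of Algorithm~\ref{alg:LSH} separately, invoking Lemma~\ref{lm:lcp_k}, Theorem~\ref{th:sort}, Observation~\ref{obs:ws}, and Fact~\ref{ft:h} to bound the individual costs, and both rely on $\eps<2$ to absorb the $n^{4/3+o(1)}$ preprocessing term. The paper's write-up is terser (it collapses the trie and collision steps into the single expression $\Oh(|\Hashes|n\log n(t+\log n))$), but the content is the same.
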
 
\begin{proof}
Preprocessing for $\lcp$ queries takes  $\Oh (n \log^3 n)$ time (Lemma~\ref{lm:lcp_k}),
whereas functions $u_{i,r}$ are processed in $\Oh(ws\cdot n^{4/3} \log^{4/3} n)$ overall time
using Theorem~\ref{th:sort}.
Afterwards, for each hash function $h\in\Hashes$ we can build a trie and an LCA data structure on strings $h(S_1),\ldots ,h(S_{2n})$ in $\Oh (t n \log n)$ time,
which is $\Oh(|\Hashes|tn\log n)$ in total.
Next, the value $\ell$ and the family $|\Collisions_{\ell+1}^{\Hashes}|$ are computed in $\Oh(|\Hashes|n\log^2 n)$ time.
The time for $|\Collisions_{\ell+1}^{\Hashes}| < 2 n |\Hashes|$ $\lcp$ queries is bounded by the same function.
Finally, we answer one $\lcpe$ query, which takes $\Oh(n)$ time.
The overall running time is $$\Oh(n\log^3 n +ws\cdot n^{4/3}\log^{4/3 }n +\! |\Hashes|n\log n(t+\log n))\!=\!\Oh(n^{4/3+o(1)}+n^{1+1/(1+\eps)}\log^2 n)$$
due to Observation~\ref{obs:ws} and Fact~\ref{ft:h}.
We can hide the first term because of $\eps < 2$.\end{proof}

\begin{lemma}
The space complexity of the algorithm is $\Oh (n^{1+1/(1+\eps)})$. 
\end{lemma}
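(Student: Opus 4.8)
The plan is to bound the space used by each component of Algorithm~\ref{alg:LSH} and to check that everything is dominated by the term $\Oh(n^{1+1/(1+\eps)})$. First I would account for the persistent data structures. The preprocessing for $\lcp$ queries occupies $\Oh(n\log^2 n)$ space by Lemma~\ref{lm:lcp_k}. Each function $u_{i,r}$ produced by Theorem~\ref{th:sort} is stored in $\Oh(n)$ space, so all $sw$ of them together use $\Oh(swn)$ space, which is $n^{1+o(1)}$ by Observation~\ref{obs:ws}. For every $h\in\Hashes$ we keep a compact trie on $h(S_1),\ldots,h(S_{2n})$ together with its LCA structure; by Theorem~\ref{th:sort} each such structure has size $\Oh(n)$, so over all of $\Hashes$ this costs $\Oh(|\Hashes|\,n)$, which is $\Oh(n^{1+1/(1+\eps)})$ by Fact~\ref{ft:h}. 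These tries must coexist in memory, since the subsequent phase ranges over all $h\in\Hashes$.

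Next I would handle the working space of the collision phase. As already observed, an $\Oh(n|\Hashes|)$-space representation of $\Collisions^{\Hashes}_\ell$ --- one contiguous range for each $\ell$-neighbourhood --- can be built from the tries, and the binary search that finds the largest $\ell$ with $|\Collisions^{\Hashes}_\ell|\ge 2n|\Hashes|$ reuses this space across its $\Oh(\log n)$ iterations; this is again $\Oh(n^{1+1/(1+\eps)})$ by Fact~\ref{ft:h}. The enumeration of $\Collisions^{\Hashes}_{\ell+1}$ for the $\lcp$ queries can be streamed, and one $\lcp$ query needs only $\Oh(\log^2 n)$ working space (Lemma~\ref{lm:lcp_k}); the final naive $\lcpe$ query uses $\Oh(n)$ working space. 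Summing the contributions, the total is $\Oh(n\log^2 n + swn + n^{1+1/(1+\eps)})$.

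The last step, and the only point needing care, is the absorption of the lower-order terms. Since $\eps\in(0,2)$ we have $1/(1+\eps)>\tfrac13$, so $n\log^2 n = n^{1+o(1)} = \Oh(n^{4/3}) = \Oh(n^{1+1/(1+\eps)})$ and likewise $swn = n^{1+o(1)} = \Oh(n^{1+1/(1+\eps)})$ by Observation~\ref{obs:ws}; hence the total space is $\Oh(n^{1+1/(1+\eps)})$. I expect the main obstacle to be essentially bookkeeping: making sure that each per-hash-function structure genuinely fits in $\Oh(n)$ space (which is exactly what Theorem~\ref{th:sort} and the linear-space LCA construction deliver) and that no step silently stores an object of size $\omega(n)$ for a single $h$ --- in particular the explicit strings $h(S_j)$, whose lengths $mt$ may be super-logarithmic, are never materialised, as the trie and the $\lcp$-query routine only manipulate pointers into the precomputed data.
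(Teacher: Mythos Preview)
Your proposal is correct and follows essentially the same approach as the paper: you itemise the $\lcp$ data structure at $\Oh(n\log^2 n)$, the $u_{i,r}$ preprocessing at $\Oh(swn)=n^{1+o(1)}$, and the tries at $\Oh(|\Hashes|\,n)=\Oh(n^{1+1/(1+\eps)})$, exactly as the paper does. Your version is in fact more thorough---the paper's proof is three sentences and does not spell out the collision-phase working space or the absorption of the lower-order terms via $\eps<2$, both of which you handle explicitly.
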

\begin{proof}
The data structure for $\lcp$ queries requires $\Oh(n \log^2 n)$ space. Preprocessing functions $u_{i,r}$ requires $\Oh (sw n) = \Oh (n^{1+o(1)})$ space and the tries occupy $\Oh(|\Hashes| \cdot n) = \Oh (n^{1+1/(1+\eps)})$ space.
\end{proof}

\subsubsection{Correctness} 
First, let us focus on two suffixes which yield the longest common substring with exactly $k$ mismatches.

\begin{lemma}\label{lm:hash_function_exists}
Let $S$ and $S'$ be suffixes of $T_1$ and $T_2$, respectively, that maximise $\lcpk(S,S')$,
i.e., such that $\lcpk(S,S')=\ell_k$.
For each $i\in\{1,\ldots,s\}$, with probability $\Omega(1/t!)$ there exists $h\in \Hashes_i$ such that $\restr{h}{[\ell_k]}(S)=\restr{h}{[\ell_k]}(S')$.
\end{lemma}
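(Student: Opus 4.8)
The plan is to fix a position $i$ and analyze the random subfamily $\Hashes_i$, which is built from $w$ independent uniformly random functions $u_{i,1},\dots,u_{i,w}\in\Projections^m$. We want to show that with probability $\Omega(1/t!)$ there is a $t$-tuple of these functions that, when restricted to the first $\ell_k$ positions, fails to separate $S$ from $S'$. Write $p := \Pr[u_{i,r}(S[1,\ell_k]) = u_{i,r}(S'[1,\ell_k])]$ for a single random $u_{i,r}\in\Projections^m$. Each $u_{i,r}$ is an $m$-tuple of independent single-position projections into $[\ell_k]$ (after the gap-filling padding, all $m$ projections land in $[\ell_k]$ since $|S|,|S'|\ge\ell_k$), so $u_{i,r}$ agrees on $S$ and $S'$ iff all $m$ sampled positions avoid the at most $\ell_k - (\ell_k - k) = $ well, more precisely iff none of the $m$ sampled positions is a mismatch position; since $d_H(S[1,\ell_k],S'[1,\ell_k]) = k$, we get $p = (1 - k/\ell_k)^m \ge (1-k/n)^m = p_1^m$. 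Hence in expectation at least $p_1^m\cdot w \ge p_1^m\cdot\lceil p_1^{-m}\rceil \ge 1$ of the $w$ functions $u_{i,r}$ are "good" (agree on $S,S'$ restricted to $[\ell_k]$); and the choice $w = t^2 + \lceil p_1^{-m}\rceil$ is designed so that we can even get concentration saying at least $t$ of them are good with constant probability.

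Concretely, the key steps in order are: (1) compute $p\ge p_1^m$ as above; (2) let $Z$ be the number of indices $r\in\{1,\dots,w\}$ with $u_{i,r}(S[1,\ell_k]) = u_{i,r}(S'[1,\ell_k])$; then $Z$ is a sum of $w$ i.i.d. Bernoulli$(p)$ variables with $\mathbb E[Z] = pw \ge p_1^m(t^2 + p_1^{-m}) \ge t^2 + 1$; (3) apply a lower-tail Chernoff/Chebyshev bound to conclude $\Pr[Z\ge t] \ge 1/2$ (or some absolute constant) — using $\mathrm{Var}[Z]\le \mathbb E[Z]$ and $\mathbb E[Z]\ge t^2$, Chebyshev gives $\Pr[Z < t] \le \Pr[|Z-\mathbb E Z| > \mathbb E Z - t] \le \mathbb E Z/(\mathbb E Z - t)^2 \le \mathbb E Z/(\mathbb E Z)^2 \cdot O(1)= O(1/t^2)$, which is even $o(1)$; (4) condition on $Z\ge t$ and pick any $t$ of the good functions $u_{i,r_1},\dots,u_{i,r_t}$ — the resulting $h=(u_{i,r_1},\dots,u_{i,r_t})\in\Hashes_i$ satisfies $\restr{h}{[\ell_k]}(S) = \restr{h}{[\ell_k]}(S')$ because $h$ agrees on $S,S'$ iff each of its $t$ component functions does; (5) hence conditionally such an $h$ exists with probability $1$, so unconditionally with probability $\ge 1/2 = \Omega(1) = \Omega(1/t!)$ — wait, we actually want the bound to match the $\Omega(1/t!)$ claimed, so the honest statement is that $\Pr[\exists h\in\Hashes_i:\restr{h}{[\ell_k]}(S)=\restr{h}{[\ell_k]}(S')] = \Omega(1)$, which is certainly $\Omega(1/t!)$; the weaker $1/t!$ form is presumably stated to interface cleanly with the later union over the $s=\Theta(t!)$ independent subfamilies $\Hashes_i$.

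The main subtlety — and the step I'd expect to be the real obstacle — is pinning down the exact constant probability in step (3) and making sure the case split between the two branches of the parameter regime (whether $w\le 3t^3$ or not, as in the proof of Fact~\ref{ft:h}) is handled: when $p_1^{-m}$ is tiny the term $t^2$ dominates $w$ and one should argue directly that with constant probability at least $t$ of the $w\approx t^2$ projections are good (here $p$ could be as small as roughly $t^2$-th root of something, so a second-moment argument still works since $\mathbb E[Z]=pw$ and we need $pw\gtrsim t^2$, i.e. $p\gtrsim 1$ — this is the delicate case and may need the observation that when $p_1^{-m}$ is small, $m$ is small, hence $p=p_1^m$ is close to $1$). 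A clean way to avoid casework is simply to note $\mathbb E[Z] = pw \ge p_1^m\cdot p_1^{-m} = 1$ always and additionally $\mathbb E[Z]\ge p_1^m t^2$; combined with $p_1^m\ge$ (some lower bound) this is enough, but the cleanest route is to prove $\mathbb E[Z]\ge t^2+1$ outright from $w = t^2+\lceil p_1^{-m}\rceil$ and $p\ge p_1^m$: $\mathbb E[Z] = pw \ge p_1^m(t^2+\lceil p_1^{-m}\rceil) \ge p_1^m\lceil p_1^{-m}\rceil + p_1^m t^2 \ge 1$, but the $p_1^m t^2$ term needs $p_1^m$ bounded below — which fails in general. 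So the fix is: if $p_1^m \ge 1/t$ then $\mathbb E[Z]\ge t$ and Markov-type reasoning after a suitable concentration bound finishes; if $p_1^m < 1/t$ then $\lceil p_1^{-m}\rceil > t$, hence $w > t^2 + t$ and the dominant contribution $p_1^m\cdot p_1^{-m}=1$ plus careful second-moment analysis with $\mathrm{Var}\le \mathbb E[Z]$ still yields constant probability that $Z\ge 1$, and $Z\ge 1$ already suffices since we need only exhibit one good $t$-tuple... except a $t$-tuple needs $t$ good functions, not one. Resolving this tension — exactly how many good $u_{i,r}$'s we need versus how many we can guarantee across both regimes — is where I'd spend the most care; I believe the intended argument is that $w$ is chosen precisely so that $\mathbb E[Z]\ge t$ in all cases (via $p_1^m\cdot t^2\ge$ something when $p_1^m$ is not small, and via the additive $p_1^{-m}$ slack when it is), after which a standard Chernoff lower-tail bound gives $\Pr[Z\ge t]=\Omega(1)=\Omega(1/t!)$.
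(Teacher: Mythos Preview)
Your approach aims for too strong a conclusion and this is where it breaks down. You set $Z$ to be the number of good functions $u_{i,r}$ and try to show $\Pr[Z\ge t]=\Omega(1)$, but in the dominant parameter regime this is simply false. With $t=\lceil\sqrt{\log n}\rceil$ and $m=\lceil\frac{1}{t}\log_{p_2}\frac{1}{n}\rceil$ one has $p_1^{-m}=2^{\Theta(\rho\sqrt{\log n})}$, which for large $n$ dominates $t^2=\Theta(\log n)$ (recall $\rho\ge \frac{1}{1+\eps}$ is bounded away from $0$). Hence $w\sim p_1^{-m}$, the success probability per trial is $q\approx p_1^m\approx 1/w$, and $\mathbb{E}[Z]=qw\to 1$. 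In this Poisson-like regime $\Pr[Z\ge t]\approx e^{-1}/t!$, so the right target is exactly $\Omega(1/t!)$, not $\Omega(1)$; your Chebyshev step cannot possibly close the gap because the expectation is not $\ge t$, let alone $\ge t^2$. Your final paragraph acknowledges the tension but resolves it with the incorrect belief that ``$w$ is chosen precisely so that $\mathbb{E}[Z]\ge t$ in all cases''~--- it is not.

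The paper's argument sidesteps concentration entirely and bounds the binomial tail directly. After observing $q\ge p_1^m\ge 1/w$ (this uses only $w\ge \lceil p_1^{-m}\rceil$), it replaces $q$ by $1/w$ using monotonicity and keeps just the $i=t$ term:
\[
\mu \;\ge\; \binom{w}{t}\frac{1}{w^t}\Bigl(1-\frac{1}{w}\Bigr)^{w}
\;\ge\; \frac{1}{t!}\Bigl(\frac{w-t+1}{w}\Bigr)^{t}\Bigl(\frac{w-1}{w}\Bigr)^{w}.
\]
Then $w\ge t^2$ (this is what the additive $t^2$ in the definition of $w$ buys) makes both parenthesised factors $\Omega(1)$, yielding $\mu=\Omega(1/t!)$. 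A minor point: your computation of the single-function collision probability has a small slip~--- the $m$ projected positions are uniform in $[1,n]$, not in $[1,\ell_k]$; the padding positions always match, so the correct expression is $q=(1-d/n)^m\ge p_1^m$ rather than $(1-k/\ell_k)^m$, though your final bound $q\ge p_1^m$ is right.
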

\begin{proof}
By definition of $\ell_k$, we have $\HD(S[1,\ell_k],S'[1,\ell_k])\le k$.
Moreover, for any hash function $h$ we have that  $\restr{h}{[\ell_k]} (S) = \restr{h}{[\ell_k]}(S')$ if and only if $h(S[1,\ell_k]) = h(S'[1,\ell_k])$. Let us recall that each hash function $h\in\Hashes_i$ is a $t$-tuple of functions $u_{i,r}\in \Projections^m$. Consequently, $h(S[1,\ell_k]) = h(S'[1,\ell_k])$ for some $h\in \Hashes_i$ if and only if the strings 
$S[1,\ell_k] \$^{n-\ell_k}$ and $S'[1,\ell_k] \$^{n-\ell_k}$ collide on at least $t$ out of $w$ functions $u_{i,r}$. 
We shall give a lower bound on the probability $\mu$ of this event.
Individual collisions are independent and each of them holds with the same probability $q=p_1^m$.
Moreover, $\mu$ may only increase as we increase $q$, so we can replace $q$ by a lower bound $\frac{1}{w}$.
(Note that $w = t^2+ \lceil{p_1^{-m}}\rceil \ge \lceil{q^{-1}\rceil}\ge q^{-1}$.)
We have
$$\mu = \sum_{i=t}^w \tbinom{w}{i} q^i (1-q)^{w-i} \ge \sum_{i=t}^w \tbinom{w}{i} \tfrac{1}{w^i} (1-\tfrac{1}{w})^{w-i}
\ge \tbinom{w}{t}\tfrac{1}{w^t} (1-\tfrac{1}{w})^{w} \ge\tfrac{1}{t!}(\tfrac{w-t+1}{w})^t(\tfrac{w-1}{w})^w.$$
Hence,
$$
\tfrac{1}{\mu t!}\le (\tfrac{w}{w-t+1})^t(\tfrac{w}{w-1})^w=(1+\tfrac{t-1}{w-t+1})^t(1+\tfrac{1}{w-1})^w \le \exp\big(\tfrac{t(t-1)}{w-t+1}+\tfrac{w}{w-1}\big)=\Oh(1),$$
where the latter is true because $w\ge t^2$ and $w\ge 2$.  Consequently, $\mu = \Omega(1/t!)$.
\end{proof}
As a corollary, we can choose a constant in the number of steps $s=\Theta(t!)$
so that $(S,S',h)\in \Collisions^{\Hashes}_{\ell_k}$ for some $h\in \Hashes$ holds with probability at least $\frac34$.
If additionally $\ell_k > \ell$, then $(S,S',h) \in \Collisions^{\Hashes}_{\ell+1}$,
so $\lcp(S,S')$ will be called and with high probability will return a substring of length $\ge \ell_k$.
Otherwise, $|\Collisions_{\ell_k}^{\Hashes}|\ge 2n|\Hashes|$
and we claim that  a uniformly random $(\bar{S},\bar{S}',\bar{h})\in \Collisions_{\ell}^{\Hashes}$
satisfies $\lcpe(\bar{S},\bar{S}')\ge \ell \ge \ell_k$ with probability at least~$\frac12$.
To prove this, we first introduce a family $\Bad^{\Hashes}$ of \emph{bad collisions}:
triples $(S,S',h)$ which belong to $\Collisions_{\ell}^{\Hashes}$ for some $\ell>\lcpe(S,S')$,
and bound its expected size.

\begin{lemma}\label{lm:collisions}
The expected number of bad collisions satisfies $\mathbb{E}[|\Bad^{\Hashes}|] \le n|\Hashes|$.
\end{lemma}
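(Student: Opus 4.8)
The plan is to bound $\mathbb{E}[|\Bad^{\Hashes}|]$ by fixing a single pair $(S,S')$ of suffixes with $\HD(S[1,\ell],S'[1,\ell]) > (1+\eps)\cdot k$ for the relevant length $\ell$, bounding the probability that they collide on a random hash function $h$, and then summing over all $\Oh(n^2)$ pairs. A triple $(S,S',h)$ is bad precisely when there is some $\ell > \lcpe(S,S')$ with $\restr{h}{[\ell]}(S)=\restr{h}{[\ell]}(S')$; equivalently, writing $p = \lcpe(S,S')$, the prefixes $S[1,p+1]$ and $S'[1,p+1]$ have Hamming distance exceeding $(1+\eps)k$, and $h$ projects away every mismatch position among the first $p+1$ characters (and may also be extended into the ``don't care'' region past the suffix ends with $\$$'s, which only helps). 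So it suffices to bound, for a fixed pair of length-$n'$ strings $X,X'$ with $\HD(X,X') \ge (1+\eps)k$, the probability that a uniformly random $h\in\Hashes$ satisfies $h(X)=h(X')$.

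First I would recall that $h$ is a $t$-tuple of independent functions $u_{i,r}\in\Projections^m$, and that the event $h(X)=h(X')$ for some $h\in\Hashes_i$ is exactly the event that $X$ and $X'$ collide on at least $t$ of the $w$ functions $u_{i,1},\ldots,u_{i,w}$. A single function $u_{i,r}\in\Projections^m$ is a product of $m$ independent single-position projections; since $X,X'$ disagree on at least a $(1+\eps)k/n$ fraction of positions (using $\HD(X,X')\ge(1+\eps)k$ and $|X|\le n$), the probability that $u_{i,r}(X)=u_{i,r}(X')$ is at most $p_2^m$ where $p_2 = 1-(1+\eps)k/n$. By the choice $m = \lceil \tfrac1t \log_{p_2}\tfrac1n\rceil$, we get $p_2^m \le p_2^{(1/t)\log_{p_2}(1/n)} = n^{-1/t}$, so each of the $w$ collision events occurs with probability at most $n^{-1/t}$. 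The number of collisions among $w$ independent trials is stochastically dominated by $\mathrm{Bin}(w, n^{-1/t})$, so the probability of at least $t$ collisions is at most $\binom{w}{t} n^{-1} \le w^t/(t!\, n)$. Summing over the $s$ independent subfamilies and over all at most $(2n)^2$ pairs $(S,S')$, and using $|\Hashes| = s\binom{w}{t} = \Theta(t!\binom{w}{t})$, the expected number of bad collisions is at most roughly $(2n)^2 \cdot s \cdot w^t/(t!\,n) = \Oh(n\cdot s\binom{w}{t}) = \Oh(n|\Hashes|)$; pinning down the constant so it is genuinely $\le n|\Hashes|$ requires tracking the $(2n)^2$ versus $n^2$ factor carefully, but since the inequality $\Prob[\ge t \text{ collisions}]\le \binom{w}{t}(n^{-1/t})^t = \binom{w}{t}/n$ already has slack (it ignores the $(1-q)^{w-i}$ factors and the fact that each suffix of $T_1$ pairs only with suffixes of $T_2$, so only $n^2$ ordered pairs matter, not $(2n)^2$), the bound $\mathbb{E}[|\Bad^{\Hashes}|]\le n|\Hashes|$ follows after choosing the constant in $s=\Theta(t!)$ appropriately or by a slightly sharper count.

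The main obstacle is the bookkeeping on the exact constant: one must be careful that the ``for some $\ell > \lcpe(S,S')$'' quantifier collapses to a single critical length (namely $\lcpe(S,S')+1$), because the collision families are nested, so $(S,S',h)\in\Bad^{\Hashes}$ is equivalent to $(S,S',h)\in\Collisions^{\Hashes}_{\lcpe(S,S')+1}$; and one must verify that restricting a projection to the first $\ell$ positions and padding with $\$$ can only help a collision, so the single-position disagreement probability is correctly bounded by $p_2$. Once those two points are in place, the computation is the routine binomial tail estimate above, and the $n^{-1/t}$ bound on $p_2^m$ coming from the definition of $m$ is what makes the $\binom{w}{t}/n$ factor appear.
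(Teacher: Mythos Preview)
Your approach is workable but takes a more circuitous route than the paper's. The paper ignores the two-level structure of $\Hashes$ entirely and uses only the \emph{marginal} distribution of a single hash function: each $h\in\Hashes$, viewed on its own, is uniform on $\Projections^{mt}$, so for any fixed triple $(S,S',h)$ the probability that it lies in $\Collisions^{\Hashes}_{\ell}$ with $\ell=\lcpe(S,S')+1$ is at most $p_2^{mt}\le 1/n$. Linearity over the $n^2|\Hashes|$ possible triples gives $\mathbb{E}[|\Bad^{\Hashes}|]\le n|\Hashes|$ immediately, with no constant to chase and no use of $w$, $t$, $s$, or binomial tails.

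Your plan, by contrast, goes through the subfamilies and the $w$ functions $u_{i,r}$, and there is a genuine gap as written: you bound the \emph{probability} that at least $t$ of the $w$ functions collide and then sum that over pairs and subfamilies, but this yields the expected number of (pair, subfamily) combinations having at least one bad collision, which undercounts $|\Bad^{\Hashes}|$ (one pair can contribute many bad $h$'s in a single subfamily). The easy fix is to note that the very same quantity $\binom{w}{t}(n^{-1/t})^t$ is, by linearity, an upper bound on the \emph{expected number} of fully colliding $t$-subsets, i.e.\ on the expected number of bad $h\in\Hashes_i$ for that pair. With that correction (and using $n^2$ ordered pairs of suffixes rather than $(2n)^2$, and keeping $\binom{w}{t}$ rather than relaxing to $w^t/t!$) you get exactly $n^2\cdot s\binom{w}{t}/n = n|\Hashes|$, so the constant-tracking worry you flag in the last paragraph evaporates. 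But all of this detour is equivalent to, and strictly more work than, the paper's observation that the marginal of $h$ is uniform on $\Projections^{mt}$, because $\sum_{h\in\Hashes_i}\Prob[h\text{ collides}]=\binom{w}{t}(p_2^m)^t=\binom{w}{t}p_2^{mt}$ is literally the same computation after regrouping.
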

\begin{proof}
Let us bound the probability that $(S,S',h)\in \Bad^{\Hashes}$ for fixed suffixes $S$ and $S'$ (of $T_1$ and $T_2$, respectively)
and fixed $h=(u_{i,r_1},\ldots,u_{i,r_t})$. Equivalently, we shall bound $\Prob[(S,S',h)\in \Collisions^{\Hashes}_{\ell}]$ for $\ell=\lcpe(S,S')+1$.

If $|S|<\ell$ or $|S'|<\ell$, the probability is 0 by the definition of $\Collisions^{\Hashes}_{\ell}$.
Otherwise, we observe that $d_H(S[1,\ell],S'[1,\ell])>(1+\eps)k$ and that $h$ can be considered (due to its marginal distribution) as a projection onto $mt$ uniformly random positions.
Therefore,
$$\Prob [\restr{h}{[\ell]} (S) = \restr{h}{[\ell]} (S')] = \Prob [h(S[1,\ell]\$^{n-\ell}) = h(S'[1,\ell]\$^{n-\ell})] \le p_2^{mt}\le \tfrac{1}{n},$$ 
where the last inequality follows from the definition of $m$, which yields $mt \ge \log_{p_2} \frac{1}{n}$.

In total, we have $n^2 |\Hashes|$ possible triples $(S,S',h)$ so by linearity of expectation, we conclude that the expected
number of bad collisions is at most $\frac{1}{n}n^2 |\Hashes|=n|\Hashes|$.
\end{proof}

\begin{corollary}\label{cr:collisions}
Let $(S,S',h)$ be a uniformly random element of $\Collisions_{\ell}^{\Hashes}$,
where $\ell$ is a \emph{random variable} which always satisfies $|\Collisions_\ell^{\Hashes}|\ge 2n|\Hashes|$.
We have $\Prob[(S,S',h)\in \Bad^{\Hashes}] \le \frac12$.
\end{corollary}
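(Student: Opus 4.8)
The plan is to condition on the random family $\Hashes$ first — so that all the collision families $\Collisions_\ell^{\Hashes}$ and the bad-collision family $\Bad^{\Hashes}$ become deterministic sets — and then reduce the statement to the expectation bound of Lemma~\ref{lm:collisions}. Note that $|\Hashes| = s\binom{w}{t}$ is a fixed number, and the hypothesis $|\Collisions_\ell^{\Hashes}|\ge 2n|\Hashes|>0$ guarantees that the uniform distribution over $\Collisions_\ell^{\Hashes}$ is well-defined for every realization of $\ell$.

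First I would record the trivial set-theoretic observation $\Collisions_\ell^{\Hashes}\cap \Bad^{\Hashes}\subseteq \Bad^{\Hashes}$, hence $|\Collisions_\ell^{\Hashes}\cap \Bad^{\Hashes}|\le |\Bad^{\Hashes}|$ for every $\ell$. Conditioning on $\Hashes$ and on the value of $\ell$, the triple $(S,S',h)$ is uniform in $\Collisions_\ell^{\Hashes}$, so
$$\Prob\bigl[(S,S',h)\in \Bad^{\Hashes}\mid \Hashes,\ell\bigr]=\frac{|\Collisions_\ell^{\Hashes}\cap \Bad^{\Hashes}|}{|\Collisions_\ell^{\Hashes}|}\le \frac{|\Bad^{\Hashes}|}{2n|\Hashes|},$$
where the denominator is lower-bounded using $|\Collisions_\ell^{\Hashes}|\ge 2n|\Hashes|$, which holds for every realization of $\ell$. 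Since the right-hand side does not depend on $\ell$, averaging over $\ell$ (given $\Hashes$) preserves the same bound.

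Finally I would take the expectation over $\Hashes$. Using that $n|\Hashes|$ is deterministic, linearity of expectation, and Lemma~\ref{lm:collisions},
$$\Prob\bigl[(S,S',h)\in \Bad^{\Hashes}\bigr]\le \mathbb{E}\!\left[\frac{|\Bad^{\Hashes}|}{2n|\Hashes|}\right]=\frac{\mathbb{E}\bigl[|\Bad^{\Hashes}|\bigr]}{2n|\Hashes|}\le \frac{n|\Hashes|}{2n|\Hashes|}=\frac12,$$
which is the claimed bound.

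The only real subtlety — the step I expect to need the most care — is that $\Collisions_\ell^{\Hashes}$ and $\Bad^{\Hashes}$ are both random and correlated through $\Hashes$, and that $\ell$ is itself a random variable possibly depending on $\Hashes$ (and possibly on extra randomness). The clean way around this is precisely the order of conditioning above: once we fix $\Hashes$, the sets $\Collisions_\ell^{\Hashes}$ and $\Bad^{\Hashes}$ are fixed, and $\ell$ is just an arbitrary choice that always lands in the regime $|\Collisions_\ell^{\Hashes}|\ge 2n|\Hashes|$, so the conditional probability is controlled uniformly in $\ell$ by $|\Bad^{\Hashes}|/(2n|\Hashes|)$; everything else is the trivial containment and Lemma~\ref{lm:collisions}.
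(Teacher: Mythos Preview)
Your proposal is correct and follows essentially the same approach as the paper: both arguments bound the conditional probability by $\mathbb{E}\bigl[|\Bad^{\Hashes}\cap\Collisions_\ell^{\Hashes}|/|\Collisions_\ell^{\Hashes}|\bigr]\le \mathbb{E}\bigl[|\Bad^{\Hashes}|/(2n|\Hashes|)\bigr]$ and then invoke Lemma~\ref{lm:collisions}. Your version is simply more explicit about the order of conditioning on $\Hashes$ and $\ell$, which the paper compresses into a single displayed chain of inequalities.
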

\begin{proof}
More formally, we shall prove that $\Pr[(S,S',h)\in \Bad^{\Hashes} \mid  (S,S',h)\in \Collisions_{\ell}^{\Hashes}]\le \frac12$
holds for a uniformly random triple $(S,S',h)$. 
Indeed:
\[\Pr[(S,S',h)\in \Bad^{\Hashes} \mid  (S,S',h)\in \Collisions_{\ell}^{\Hashes}]=\mathbb{E}\left[\tfrac{|\Bad^{\Hashes}\cap \Collisions_{\ell}^{\Hashes}|}{|\Collisions_{\ell}^{\Hashes}|}\right]\le  \mathbb{E}\left[\tfrac{|\Bad^{\Hashes}|}{2n|\Hashes|}\right]\le \tfrac12.\quad \qedhere\]
\end{proof}

Below, we combine the previous results to prove that with constant probability Algorithm~1 correctly solves
the \ApproxLCS problem.
Note that we can reduce the error probability to an arbitrarily small constant $\delta>0$:
it suffices to repeat the algorithm a constant number of times and among the resulting pairs,
choose the longest substrings successfully verified to be at Hamming distance at most $(1+\eps)k$;
verification can be implemented naively in $\Oh(n)$ time.

\begin{corollary}
With non-zero constant probability, Algorithm 1 succeeds~--- it reports a substring of $T_1$ 
and a substring of $T_2$ at Hamming distance at most $(1+\eps)k$, both of length at least $\ell_k$, where $\ell_k$ is the length of the longest common substring with $k$ mismatches.
\end{corollary}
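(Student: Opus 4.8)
The plan is to condition on the success of the two randomised preprocessing routines, reduce the correctness requirement to a single length bound, and then split on the value of the selected level $\ell$. Concretely, I would let $E_1$ be the event that the $\lcp$ data structure of Lemma~\ref{lm:lcp_k} answers all queries correctly ($\Prob[E_1]\ge 1-n^{-3}$), and $E_2$ the event that every function $u_{i,r}$ is preprocessed correctly by Theorem~\ref{th:sort} ($\Prob[\overline{E_2}]=\Oh(sw/n)=o(1)$ by Observation~\ref{obs:ws}). On $E_1\cap E_2$ the tries faithfully represent the orders $\prec_h$, so the algorithm really does compute the largest $\ell$ with $|\Collisions^{\Hashes}_\ell|\ge 2n|\Hashes|$, enumerate $\Collisions^{\Hashes}_{\ell+1}$, and draw $(\bar S,\bar S',\bar h)$ uniformly from $\Collisions^{\Hashes}_\ell$; moreover every answer returned by an $\lcp$ query lies in $[\lcpk,\lcpe]$ of the corresponding pair and the single $\lcpe$ answer is exact. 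The first fact I would record is that this makes \emph{every} candidate length $L$ examined by the algorithm automatically valid: $L$ never exceeds $\lcpe$ of the pair it came from, so the length-$L$ prefixes of that pair are at Hamming distance at most $(1+\eps)k$. Hence it remains only to show that, with constant probability, the reported length is at least $\ell_k$.

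Next I would fix suffixes $S,S'$ of $T_1,T_2$ with $\lcpk(S,S')=\ell_k$ and let $\ell$ be the level chosen by the algorithm ($\ell$ always exists, since $\Collisions^{\Hashes}_0$ already has $2n^2|\Hashes|\ge 2n|\Hashes|$ elements), and distinguish two cases according to the random variable $\ell$. If $\ell\ge\ell_k$, then Corollary~\ref{cr:collisions} applies (its hypothesis $|\Collisions^{\Hashes}_\ell|\ge 2n|\Hashes|$ holds), so the drawn triple satisfies $(\bar S,\bar S',\bar h)\notin\Bad^{\Hashes}$ with probability at least $\tfrac12$; and when it is not bad, then since $(\bar S,\bar S',\bar h)\in\Collisions^{\Hashes}_\ell$ the definition of $\Bad^{\Hashes}$ forces $\ell\le\lcpe(\bar S,\bar S')$, so the final $\lcpe$ query yields a candidate of length $\ge\ell\ge\ell_k$. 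If $\ell<\ell_k$, then Lemma~\ref{lm:hash_function_exists} together with the choice $s=\Theta(t!)$ guarantees that with probability at least $\tfrac34$ there is $h\in\Hashes$ with $(S,S',h)\in\Collisions^{\Hashes}_{\ell_k}$; by nestedness of the collision families and $\ell_k\ge\ell+1$ this gives $(S,S',h)\in\Collisions^{\Hashes}_{\ell+1}$, so the algorithm performs the query $\lcp(S,S')$, whose answer is at least $\lcpk(S,S')=\ell_k$.

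Finally I would assemble these into one bound. Let $F$ be the event that the relevant sub-argument goes through, i.e.\ $F=(\{\ell\ge\ell_k\}\cap\{(\bar S,\bar S',\bar h)\notin\Bad^{\Hashes}\})\cup(\{\ell<\ell_k\}\cap\{\exists h:(S,S',h)\in\Collisions^{\Hashes}_{\ell_k}\})$. The two complementary failure events $\{\ell\ge\ell_k\}\cap\{(\bar S,\bar S',\bar h)\in\Bad^{\Hashes}\}$ and $\{\ell<\ell_k\}\cap\{\forall h:(S,S',h)\notin\Collisions^{\Hashes}_{\ell_k}\}$ are disjoint, being contained in disjoint ranges of $\ell$, so $\Prob[\overline F]\le\tfrac12+\tfrac14=\tfrac34$ by Corollary~\ref{cr:collisions} and Lemma~\ref{lm:hash_function_exists}. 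On $E_1\cap E_2\cap F$ the algorithm reports a valid answer of length at least $\ell_k$, so a union bound gives success probability at least $1-n^{-3}-o(1)-\tfrac34=\tfrac14-o(1)$, a positive constant for all large enough $n$; combined with the repetition argument stated before the corollary, this can be boosted below any $\delta>0$. The step I expect to require the most care is precisely this last combination: the ``good event'' one relies on ($(\bar S,\bar S',\bar h)$ not bad versus the optimal pair colliding at level $\ell_k$) depends on which side of $\ell_k$ the random level $\ell$ falls on, so one cannot simply multiply the two success probabilities and must instead bound the two failure contributions separately over $\{\ell\ge\ell_k\}$ and $\{\ell<\ell_k\}$ before a union bound with $\overline{E_1}$ and $\overline{E_2}$.
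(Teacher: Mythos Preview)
Your proposal is correct and follows essentially the same approach as the paper: condition on the two preprocessing routines succeeding, then use Lemma~\ref{lm:hash_function_exists} and Corollary~\ref{cr:collisions} together with the case distinction $\ell_k>\ell$ versus $\ell_k\le\ell$ to guarantee a candidate of length at least $\ell_k$, arriving at success probability $\tfrac14-o(1)$ via a union bound. The only cosmetic difference is that the paper simply conditions on all four good events simultaneously, whereas you package the last two into a case-dependent event $F$ and exploit the disjointness of $\{\ell\ge\ell_k\}$ and $\{\ell<\ell_k\}$; the resulting numerical bound is identical.
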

\begin{proof}
We will prove that the algorithm succeeds conditioned on the following events:
\begin{itemize}
  \item the preprocessing of Lemma~\ref{lm:lcp_k} succeeds,
  \item the preprocessing of Theorem~\ref{th:sort} succeeds for each function $u_{i,r}$,
  \item $\Collisions_{\ell_k}^{\Hashes}$ contains $(S,S',h)$ such that $\lcpk(S,S')= \ell_k$ (see Lemma~\ref{lm:hash_function_exists}),
  \item the randomly chosen $(\bar{S},\bar{S}',\bar{h})\in \Collisions_{\ell}^{\Hashes}$ does not belong to $\Bad^{\Hashes}$ (see Corollary~\ref{cr:collisions}).
\end{itemize}
This assumption holds with probability $\Omega(1)$, because probability of the complementary event 
can be bounded as follows using the union bound  applied on the top of Lemma~\ref{lm:lcp_k}, Theorem~\ref{th:sort}, Lemma~\ref{lm:hash_function_exists}, and Corollary~\ref{cr:collisions}:
$$\tfrac{1}{n^3}+\Oh(\tfrac{ws}{n})+\tfrac14+\tfrac12=\tfrac34+o(1)=1-\Omega(1).$$ 

Successful preprocessing of functions $u_{i,r}$ guarantees that the value $\ell$ and the families $\Collisions_{\ell}^{\Hashes}$ and $\Collisions_{\ell+1}^{\Hashes}$
have been computed correctly. If $\ell_k > \ell$, then $\Collisions_{\ell+1}^{\Hashes}$ contains $(S,S',h)$ such that $\lcpk(S,S')=\ell_k$.
The correctness of $\lcp$ queries asserts that $\lcp(S,S')\ge \ell_k$, so the algorithm considers prefixes of $S$ and $S'$ of length at least $\ell_k$
as candidates for the resulting substrings. 
If $\ell_k \le \ell$, on the other hand, then the randomly chosen $(\bar{S},\bar{S}',\bar{h})\in \Collisions_{\ell}^{\Hashes}$  satisfies $\lcpe(\bar{S},\bar{S}')\ge \ell\ge \ell_k$,
so the algorithm considers prefixes of $\bar{S}$ and $\bar{S}'$ of length at least $\ell\ge \ell_k$.
In either case, a pair of substrings of length at least $\ell_k$ and at Hamming distance at most $(1+\eps)k$ is among the considered candidates.
The resulting substrings also satisfy these conditions, because we return the longest candidates and the correctness of $\lcp$ queries asserts that no substrings at distance more than $(1+\eps)k$
are considered. 
\end{proof}

\subsection{Proof of Theorem~\ref{th:sort}}
Recall that each $h\in \Hashes$ is a $t$-tuple of functions $u_{i,r}$, i.e.\ $h = (u_{i,r_1}, u_{i,r_2}, \ldots, u_{i,r_t})$, where $1\le i \le s$ and $1 \le r_1 < r_2 < \cdots < r_t \le w$. We will show a preprocessing of functions $u_{i,r}$ after which we will be able to compute the longest common prefix of any two strings $u_{i,r} (S_j), u_{i,r}(S_{j'})$ in $\Oh(1)$ time. As a result, we will be able to compute the longest common prefix of $h(S_j), h(S_{j'})$ in $\Oh(t)$ time. It also follows that we will be able to compare any two strings $h(S_j), h(S_{j'})$ in $\Oh(t)$ time as the order $\prec_h$ is defined by the character following the longest common prefix (or by the lengths $|S_j|$ and $|S_{j'}|$ if $h(S_j)=h(S_{j'})$). Therefore, we can sort strings $h(S_1), h(S_2), \ldots, h(S_{2n})$ in $\Oh (t n \log n)$ time and $\Oh(n)$ space and then compute the longest common prefix of each two adjacent strings in $\Oh(tn)$ time.  The trie on $h(S_1), h(S_2), \ldots, h(S_{2n})$ can then be built in $\Oh (n)$ time by imitating its depth-first traverse. 

It remains to explain how we preprocess individual functions $u_{i,r}$. For each function, it suffices to build a trie on strings $u_{i,r}(S_1), u_{i,r}(S_2), \ldots, u_{i,r}(S_{2n})$ and to augment it with an LCA data structure~\cite{DBLP:journals/siamcomp/FischerH11,DBLP:journals/siamcomp/HarelT84}. We will consider two different methods for constructing the trie with time dependent on~$m$. No matter what the value of $m$ is, one of these methods will have $\Oh(n^{4/3}\log^{4/3} n)$ running time. Let $u_{i,r}$ be a projection onto a multiset $\Pos$ of positions $1 \le a_1 \le a_2 \le \cdots \le a_m \le n$ and denote $T = T_1 \$^{n} T_2 \$^{n}$. 

\begin{lemma}\label{lm:sort1}
The trie on $u_{i,r}(S_1), \dots, u_{i,r}(S_{2n})$ can be constructed in $\Oh (\sqrt{m} n \log n)$ time and $\Oh(n)$ space correctly with error probability at most $1/n$.
\end{lemma}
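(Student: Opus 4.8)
The plan is to reduce the problem of sorting the strings $u_{i,r}(S_1),\dots,u_{i,r}(S_{2n})$ to a longest-common-prefix computation on a *single* auxiliary string, to which we can apply a standard suffix-array construction. Recall that $u_{i,r}$ projects onto the multiset of positions $a_1\le a_2\le\dots\le a_m$, and that $u_{i,r}(S_j)$ is the length-$m$ string $S_j[a_1]S_j[a_2]\cdots S_j[a_m]$ (padded with $\$$'s where $S_j$ runs out). The first step is to observe that the letters of $u_{i,r}(S_j)$ are read off $T=T_1\$^nT_2\$^n$ at positions forming an arithmetic-like pattern: if $S_j$ starts at position $o_j$ in $T$, then the $\ell$-th letter of $u_{i,r}(S_j)$ is $T[o_j+a_\ell-1]$ (or $\$$). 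Thus two hashed strings $u_{i,r}(S_j)$ and $u_{i,r}(S_{j'})$ agree on a prefix of length $\ell$ exactly when $T[o_j+a_1-1\,..\,o_j+a_\ell-1]$ (restricted to those $\ell$ sampled positions) equals the corresponding restriction starting at $o_{j'}$ — in other words, when the substrings of $T$ of a suitable span, *subsampled at the offsets $a_1,\dots,a_\ell$*, coincide.

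The key technical step is to handle this subsampling efficiently. The natural idea is to build, for the fixed multiset $\Pos$, a derived string $T^{(\Pos)}$ in which consecutive characters correspond to the sampled positions, so that a longest common prefix of $u_{i,r}(S_j)$ and $u_{i,r}(S_{j'})$ becomes an ordinary longest common prefix of two suffixes of $T^{(\Pos)}$; building a suffix array and an LCA/LCP data structure on $T^{(\Pos)}$ then answers longest-common-prefix queries on the hashed strings in $\Oh(1)$ time, after which the trie on $u_{i,r}(S_1),\dots,u_{i,r}(S_{2n})$ is obtained by sorting in $\Oh(n\log n)$ comparisons (each taking $\Oh(1)$ time) and imitating a depth-first traversal, all in $\Oh(n)$ extra space. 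The subtlety is that the offsets $a_\ell$ may be as large as $n$, so a single derived string cannot simultaneously serve all $2n$ starting positions while staying of size $\Oh(n)$. This is where the parameter $\sqrt{m}$ enters: I would cover the $m$ sampled offsets with $\Oh(\sqrt m)$ consecutive blocks of $\Oh(\sqrt m)$ offsets each (or symmetrically, $\sqrt m$ blocks of length covering roughly $n/\sqrt m$ of the offset range), build one suffix-array-based structure per block, and compare two hashed strings by scanning the blocks from left to right, spending $\Oh(1)$ per block via an LCP query until the first block where they differ; a mismatch is found after $\Oh(\sqrt m)$ block queries. Sorting then costs $\Oh(\sqrt m\, n\log n)$ time, and each of the $\Oh(\sqrt m)$ structures uses $\Oh(n)$ space but they can be built and discarded one block at a time, keeping peak space $\Oh(n)$.

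The randomness and the $1/n$ error probability come from the suffix-array / LCP machinery: I would use a randomised linear-time suffix array (or a randomised LCE data structure such as those cited in the paper) that is correct with probability $1-1/\mathrm{poly}(n)$; taking the polynomial degree large enough absorbs the $\Oh(\sqrt m)$ blocks by a union bound and still leaves total error $\le 1/n$ for this single function $u_{i,r}$. The main obstacle I anticipate is precisely the block decomposition bookkeeping: one must verify that comparing hashed strings block-by-block genuinely reproduces the lexicographic order $\prec_h$ (including the tie-break by $|S_j|$ when $u_{i,r}(S_j)=u_{i,r}(S_{j'})$, which happens when both strings are exhausted before the offsets are), and that the derived string for each block can be built in $\Oh(n)$ time from $T$ despite the gap-filling with $\$$'s. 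Everything else — the $\Oh(n\log n)$ sort, the $\Oh(n)$ trie construction by depth-first simulation, and the LCA augmentation in $\Oh(n)$ time and space — is standard.
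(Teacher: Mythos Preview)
Your block decomposition into $\sqrt m$ chunks is the right top-level structure and matches the paper, but the core step you rely on does not work. You assert that for a single block $B_\ell=\{a_{\ell,1},\ldots,a_{\ell,\sqrt m}\}$ one can build an $\Oh(n)$-size ``derived string'' $T^{(B_\ell)}$ so that the projected strings $b_\ell(S_j)=T[o_j{+}a_{\ell,1}{-}1]\cdots T[o_j{+}a_{\ell,\sqrt m}{-}1]$ become suffixes of $T^{(B_\ell)}$ and hence admit $\Oh(1)$-time LCP queries via a suffix array. No such string exists in general: the offsets $a_{\ell,1}<\cdots<a_{\ell,\sqrt m}$ are an \emph{arbitrary} (non-arithmetic) subset of $[1,n]$, so the $2n$ strings $b_\ell(S_1),\ldots,b_\ell(S_{2n})$ are not shifts of one another and are not substrings of any common $\Oh(n)$-length string. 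The only obvious reduction, concatenating them explicitly, yields a text of length $\Theta(n\sqrt m)$ per block and $\Theta(nm)$ overall, which in the regime where this lemma is invoked (up to $m\approx (n\log n)^{2/3}$) is far above the claimed $\Oh(\sqrt m\, n\log n)$. You flag ``bookkeeping'' as the obstacle, but the missing piece is more fundamental: you have no $\Oh(n)$-per-block LCP oracle. A secondary issue is your space argument: if sorting proceeds by comparing full strings block-by-block, all $\sqrt m$ block structures must be resident simultaneously, so you cannot discard them one at a time as claimed.

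The paper avoids this obstacle by never building an LCP oracle for a block. Instead, for each block it computes, via a single FFT in $\mathbb{Z}_p$, a Karp--Rabin fingerprint of $b_\ell(S_j)$ for every $j$ simultaneously in $\Oh(n\log n)$ time; this is where the randomness (and the $1/n$ error bound) actually enters, not from any randomised suffix array. Fingerprints give only \emph{equality} tests, not LCPs, so within each leaf of the current partial trie the paper first buckets by fingerprint and then sorts the distinct buckets by brute-force character comparison in $\Oh(\sqrt m)$ time per comparison. The key accounting step you are missing is an amortisation: the number of distinct-fingerprint comparisons across \emph{all} layers is bounded by the number of branching events in the final trie, which is $\Oh(n)$, so these brute-force comparisons cost $\Oh(\sqrt m\, n\log n)$ in total. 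Because the trie is extended one layer at a time, each layer's FFT output and fingerprint table can be discarded before the next, keeping peak space $\Oh(n)$.
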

\begin{proof}
Without loss of generality assume that $\sqrt{m}$ is integer. Let us partition $\Pos$ into subsets $B_1, \dots, B_{\sqrt{m}}$, where 

$$B_\ell = \{a_{\ell,1}, a_{\ell,2}, \ldots, a_{\ell, \sqrt{m}}\} = \{a_{(\ell-1)\sqrt{m}+q} \; | \; q \in [1, \sqrt{m}]\}.$$ 

Now $u_{i,r}$ can be represented as a $\sqrt{m}$-tuple of projections $b_1, b_2, \ldots, b_{\sqrt{m}}$ onto the subsets $B_1, B_2, \ldots, B_{\sqrt{m}}$, respectively. We will build the trie by layers to avoid space overhead. Suppose that we have built the trie for a function $(b_1, b_2, \ldots, b_{\ell-1})$ and we want to extend it to the trie for $(b_1, b_2, \ldots, b_{\ell-1}, b_\ell)$. 

Let $p$ be a prime of value $\Omega(n^5)$. With error probability inverse polynomial in $n$, we can find such $p$ in $\Oh(\log^{\Oh(1)}n)$ time;
see~\cite{DBLP:journals/moc/TaoCH12,agrawal2004primes}. 
We choose a uniformly random $r \in \mathbb{F}_p$ and create a vector $\chi$ of length $n$.
We initialise $\chi$ as a zero vector and for each position $a_{\ell,q}\in B_\ell$, we increase $\chi [a_{\ell,q}]$ by $r^{q}$.
We then run the FFT algorithm for $\chi$ and $T$ in the field $\mathbb{Z}_p$~\cite{FischerPaterson}. The output of the FFT algorithm contains the inner products of $\chi$ and all suffixes $S_1, S_2, \ldots, S_{2n}$. The inner product of $\chi$ and a suffix $S_j$ is the Karp--Rabin fingerprint~\cite{DBLP:journals/ibmrd/KarpR87} $\varphi_{\ell,j}$ of $b_\ell (S_j)$, where

$$\varphi_{\ell,j} = \sum_{q=1}^{\sqrt{m}} S_j [a_{\ell,q}] \cdot r^{q} \pmod p.$$

If the fingerprints of $b_\ell (S_j)$ and $b_\ell (S_{j'})$ are equal, then $b_\ell (S_j)$ and  $b_\ell (S_{j'})$ are equal with probability at least $1-1/n^4$, and otherwise they differ (for a proof, see e.g.~\cite{Porat:09}).

For a fixed leaf of the trie for $(b_1, b_2, \ldots, b_{\ell-1})$, we first sort all the suffixes that end in it by fingerprints~$\varphi_{\ell,j}$. Second, we lexicographically sort the strings $b_\ell(S_{j})$ with distinct fingerprints. For this, we need to be able to compare $b_\ell (S_j)$ and  $b_\ell (S_{j'})$ and to find the first character where they differ. We compare $b_\ell (S_j)$ and  $b_\ell (S_{j'})$ character-by-character in $\Oh(\sqrt{m})$ time. We then append the leaf of the trie for $(b_1, b_2, \ldots, b_{\ell-1})$ with a trie on strings $b_\ell (S_j)$ that can be built by imitating its depth-first traverse.

By the union bound, the error probability is at most $\frac{1}{n^4} \cdot n^2 \sqrt{m} \le \frac{1}{n}$.
We now analyse the complexity of the algorithm. For each of the $\sqrt{m}$ layers, the FFT algorithm takes $\Oh(n \log n)$ time. The sort by fingerprints takes $\Oh(n \log n)$ time per layer, or $\Oh(\sqrt{m} n \log n)$ time in total. We finally need to estimate the total number of character-by-character comparisons in all the layers. We claim that it can be upper bounded by $\Oh(n \log n)$. The reason for that is as follows: if we consider the resulting trie for $u_{i,r}(S_1), \dots, u_{i,r}(S_{2n})$, it has size $\Oh(n)$. Imagine that the layers cut this trie into a number of smaller tries. The total size of these tries is still $\Oh(n)$, and we build each of these tries using character-by-character comparisons. For a trie of size $x$, we need $\Oh(x \log x)$ comparisons, which in total is $\Oh(n \log n)$. Therefore, the character-by-character comparisons take $\Oh(\sqrt{m} n \log n)$ time in total.
\end{proof}

The second method builds the trie using the algorithm described in the first paragraph of this section: we only need to give a method for computing the longest common prefix of $u_{i,r} (S_j)$ and $u_{i,r}(S_{j'})$ (or, equivalently, the first position where $u_{i,r} (S_j)$ and $u_{i,r}(S_{j'})$ differ). The following lemma shows that this query can be answered in $\Oh (n \log n / m)$ time, which gives $\Oh (n^2 \log^2 n / m)$ time complexity of the trie construction.

\begin{lemma}[see \cite{substringNN}]\label{lm:sort2}
After $\Oh(n)$-time and space preprocessing the first position where two strings $u_{i,r}(S_j)$ and $u_{i,r}(S_{j'})$ differ can be found in $\Oh (n \log n / m)$ time correctly with error probability at most $1/n^3$.
\end{lemma}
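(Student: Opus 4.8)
The plan is to reduce the problem of finding the first mismatch between $u_{i,r}(S_j)$ and $u_{i,r}(S_{j'})$ to a small number of longest-common-extension (LCE) queries on the concatenated text $T = T_1\$^n T_2 \$^n$, combined with a binary search over the index set $\{1,\dots,m\}$ of the projection. Recall that $u_{i,r}$ projects onto a multiset $\Pos = \{a_1 \le a_2 \le \cdots \le a_m\}$, so $u_{i,r}(S_j)$ is obtained by reading the characters of $S_j$ (padded with $\$$) at positions $a_1,\dots,a_m$ in order. Thus the first position $p$ where $u_{i,r}(S_j)$ and $u_{i,r}(S_{j'})$ differ corresponds to the smallest index $p$ such that $S_j[a_p] \ne S_{j'}[a_p]$ (with the usual convention that out-of-range positions read $\$$). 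I would first build, in $\Oh(n)$ time and space, a data structure supporting $\Oh(1)$-time LCE queries on $T$ — for instance a Karp--Rabin-fingerprint-based structure as in~\cite{DBLP:journals/ibmrd/KarpR87}, which answers ``what is the length of the longest common prefix of $T[x,n']$ and $T[y,n']$'' correctly with failure probability inverse-polynomial in $n$; a single global choice of fingerprint modulus and seed gives error $\Oh(1/n^3)$ over all queries.

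The key observation is that a block of \emph{consecutive} positions $a_p, a_{p+1}, \dots, a_{p'}$ in $\Pos$ that happen to form an arithmetic-progression-like run inside $S_j$ can be checked for a mismatch against the corresponding block in $S_{j'}$ with a single LCE query: if $a_{p+\iota} = a_p + \iota\cdot\Delta + (\text{shift of }S_j)$ and likewise for $S_{j'}$, the characters read agree on the whole block iff the LCE of the appropriate suffixes of $T$ is at least the block length (times $\Delta$). However, in general $\Pos$ has no such structure, so instead I would binary-search directly: to decide whether the first mismatch index is $\le p$, it suffices to test whether the multiset-restricted projections of $S_j[1,a_p]$ and $S_{j'}[1,a_p]$ onto $\{a_1,\dots,a_p\}$ are equal. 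Equality of these two projected strings is exactly equality of two Karp--Rabin-style fingerprints that can be precomputed for all prefixes: during the $\Oh(n)$ preprocessing, for each position $a_q \in \Pos$ one maintains a running fingerprint of $S_\bullet[a_1] S_\bullet[a_2] \cdots S_\bullet[a_q]$ expressed in terms of the individual text characters, so that for any suffix $S_j$ the fingerprint of the length-$q$ prefix of $u_{i,r}(S_j)$ is obtainable in $\Oh(1)$ time (via a convolution / FFT over $T$ performed once at preprocessing, analogous to the construction in Lemma~\ref{lm:sort1}, costing $\Oh(n\log n)$, which is absorbed). With $\Oh(1)$-time fingerprint access per $(j,q)$, the binary search over $q \in \{1,\dots,m\}$ costs $\Oh(\log m) = \Oh(\log n)$ fingerprint comparisons, giving the first-mismatch index, and then one final $\Oh(1)$-time character lookup in $T$ recovers the differing characters themselves.

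I expect the main obstacle to be shaving the query time down to $\Oh(n\log n/m)$ rather than the $\Oh(\log n)$ the naive binary search suggests — the bound in the lemma is only interesting when $m$ is large, and the point is that the preprocessing cost and the query cost must both be controlled so that the overall trie-construction cost is $\Oh(n^2\log^2 n/m)$. The way to get the $n/m$ factor is to \emph{not} precompute prefix fingerprints of the projected strings (which would cost $\Theta(nm)$ space in the worst case, too much), but instead to answer each first-mismatch query by splitting $\Pos$ into $\Oh(n/m)$ contiguous blocks such that within each block the positions $a_q$ are \emph{increasing by exactly one} over a contiguous substring of $T$ — this is possible because $\Pos$ has only $m$ elements spread over a length-$n$ text, so by pigeonhole the ``gaps'' $a_{q+1}-a_q$ that exceed $1$ number at most $\Oh(n/m)$ in expectation (or one can pad $\Pos$ with dummy already-resolved positions to force this), and each maximal run of consecutive positions is handled by a single $\Oh(1)$-time LCE query on $T$. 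Scanning these $\Oh(n/m)$ runs in order and stopping at the first one containing a mismatch, then binary-searching within that single run, yields $\Oh(n/m + \log m) = \Oh(n\log n/m)$ query time; the error probability is that of the underlying fingerprint/LCE structure, i.e.\ at most $1/n^3$, completing the proof. The delicate point to get right in the full write-up is the precise block decomposition of an arbitrary multiset $\Pos$ and verifying that $\Oh(n/m)$ blocks always suffice, which is exactly the combinatorial heart of the argument and where I would spend the most care.
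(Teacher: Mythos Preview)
Your block-decomposition argument does not work: the pigeonhole runs in the wrong direction. With $m$ positions $a_1\le\cdots\le a_m$ in $[1,n]$, the number of indices $q$ with $a_{q+1}-a_q>1$ can be as large as $m-1$ (and for $m$ uniformly random positions with $m\ll n$ this is typically $\Theta(m)$, not $\Oh(n/m)$, since the chance that two successive sampled positions are adjacent is only about $m/n$). So in the regime where the lemma is useful you would be left with $\Theta(m)$ runs, each needing an LCE query, which is far worse than the claimed $\Oh(n\log n/m)$. The earlier fingerprint idea also stalls exactly where you noticed: getting $\Oh(1)$-time access to the fingerprint of the length-$q$ prefix of $u_{i,r}(S_j)$ for \emph{every} pair $(j,q)$ genuinely requires $\Theta(nm)$ precomputation, and there is no obvious way to collapse this with a single FFT because varying $q$ changes the support of the convolution kernel.

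The paper's proof exploits a feature you never use: the multiset $\Pos$ is itself \emph{random}. One builds a suffix tree of $T$ with an LCA structure ($\Oh(n)$ time and space), and at query time applies the kangaroo method to list the first $\ell=\lceil 3n\ln n/m\rceil$ positions $q_1<\cdots<q_\ell$ where the full suffixes $S_j$ and $S_{j'}$ differ, in $\Oh(\ell)=\Oh(n\log n/m)$ time. One then returns the first $q_i$ that lies in $\Pos$. Correctness is probabilistic: since the $m$ positions of $\Pos$ are independent uniform samples from $[1,n]$, the probability that all of them miss the set $\{q_1,\ldots,q_\ell\}$ is $(1-\ell/n)^m\le e^{-3\ln n}=1/n^3$, so with probability at least $1-1/n^3$ the first projected mismatch is among the first $\ell$ full mismatches. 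The randomness of $\Pos$, not fingerprinting, is what drives both the time bound and the error probability.
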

\begin{proof}
For $m=\Oh(\log n)$ the conclusion is trivial.
Assume otherwise.
We start by building the suffix tree for the string $T$ which takes $\Oh(n)$ time and space~\cite{DBLP:conf/focs/Weiner73,DBLP:books/cu/Gusfield1997}. Furthermore, we augment the suffix tree with an LCA data structure in $\Oh(n)$ time~\cite{DBLP:journals/siamcomp/FischerH11,DBLP:journals/siamcomp/HarelT84}. 

Let $\ell =  \lceil 3 n \ln n / m \rceil$.  We can find the first $\ell$ positions $q_1 < q_2 < \cdots < q_\ell$ where $S_{j}$ and $S_{j'}$ differ in $\Oh(\ell)=\Oh(n \log n / m)$ time using the kangaroo method~\cite{DBLP:journals/tcs/LandauV86,kangaroo-2}. We set $q_r=\infty$ if a given position does not exist. The idea of the kangaroo method is as follows. We can find $q_1$ by one query to the LCA data structure in $\Oh(1)$ time. After removing the first $q_1$ positions of $S_{j}$ and $S_{j'}$, we obtain suffixes $S_{j+q_1}, S_{j'+q_1}$ and find $q_2$ by another query to the LCA data structure, and so on. If at least one of the positions $q_1, q_2, \ldots, q_\ell$ belongs to  $\Pos$, then we return the first such position as an answer, and otherwise we say that $u_{i,r} (S_j) = u_{i,r} (S_{j'})$. The multiset $\Pos$ can be stored as an array of multiplicities so that testing if an element belongs to it can be done in constant time.

Let us show that if $p$ is the first position where $u_{i,r}(S_j)$ and $u_{i,r}(S_{j'})$ differ, then $p$ belongs to $\{q_1, q_2, \ldots, q_\ell\}$ with high probability. Because $q_1 < q_2 < \cdots < q_\ell$ are the first $\ell$ positions where $S_j$ and $S_{j'}$ differ, it suffices to show that at least one of these positions belongs to $\Pos$. We rely on the fact that positions of $\Pos$ are independent and uniformly random elements of $[1,n]$. Consequently, we have $\Pr[q_1, \ldots, q_\ell \notin \Pos] = (1 - \ell/n)^{m} \le (1 - 3 \ln n/m)^{m} \le \frac{1}{e^{3\ln n}} = 1/n^3$. 
\end{proof}

By Lemmas~\ref{lm:sort1} and~\ref{lm:sort2}, the trie on strings $u_{i,r} (S_1), \dots, u_{i,r} (S_{2n})$ can be built in $\Oh(\min \{\sqrt{m}, n \log n / m \} \cdot n \log n) = \Oh(n^{4/3} \log^{4/3} n)$ time and $\Oh(n)$ space correctly with high probability which implies Theorem~\ref{th:sort} as explained in the beginning of this section.

\section{\ApproxLCS}
\label{sec:ApproxLCS}	
	In this section, we consider an approximate variant of the \kLCS problem, defined as follows.

\begin{problem}[\ApproxLCS]\label{pr:apprLCS}
Two strings $T_1, T_2$ of length $n$, an integer $k$, and a constant $z > 1$ are given. If $\ell_k$ is the length of the longest common substring with $k$ mismatches of $T_1$ and $T_2$, return a substring of $T_1$ of length at least $\ell_k/z$ that occurs in $T_2$ with at most $k$ mismatches.
\end{problem}

\begin{theorem}\quad
  \begin{enumerate}[(a)]
    \item  The \ApproxLCS problem for $z=2$ can be solved in $\Oh(n^{1.5}\log^2 n)$ time and $\Oh(n^{1.5})$ space.
    \item Suppose there exist $0 < \eps < 1$ and $\delta>0$ such that the \ApproxLCS problem for $z=2-\eps$ and a binary alphabet can be solved in $\Oh(n^{2-\delta})$ time. Then SETH is false.
  \end{enumerate}
\end{theorem}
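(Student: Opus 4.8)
\emph{We sketch a proof of the two parts in turn.}

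\textbf{Part (a).} The plan is to invoke Theorem~\ref{th:main} for \kApproxLCS with the particular constant $\eps=1$, which makes the exponent $1+\tfrac{1}{1+\eps}$ equal $\tfrac32$ and hence meets the claimed $\Oh(n^{1.5}\log^2 n)$ time and $\Oh(n^{1.5})$ space. We may assume $k\ge 1$, since $k=0$ is the classic \LCSp problem, solvable in $\Oh(n)$ time. Running Algorithm~\ref{alg:LSH} with $\eps=1$ produces, with constant probability, a pair of substrings $S_1$ of $T_1$ and $S_2$ of $T_2$ with $|S_1|=|S_2|=\ell\ge\ell_k$ and $\HD(S_1,S_2)\le 2k$. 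I would then ``halve'' this pair: letting $f(j)=\HD(S_1[1,j],S_2[1,j])$, which is nondecreasing with unit increments, take $j_0$ to be the largest index with $f(j_0)\le k$. If $j_0=\ell$ we return $S_1$; otherwise $f(j_0+1)=k+1$, so $f(j_0)=k$ and $\HD(S_1[j_0+1,\ell],S_2[j_0+1,\ell])=\HD(S_1,S_2)-k\le k$. Thus both $S_1[1,j_0]$ and $S_1[j_0+1,\ell]$ occur in $T_2$ with at most $k$ mismatches, and the longer of the two has length at least $\lceil \ell/2\rceil\ge\ell_k/2$; we return it. Computing $j_0$ and the relevant Hamming distances naively costs $\Oh(n)$, so the running time is dominated by Algorithm~\ref{alg:LSH}, and the error probability is reduced to an arbitrary constant exactly as in Theorem~\ref{th:main}. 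There is no genuine obstacle here; the one thing to notice is that $\eps=1$ simultaneously hits the running-time target and makes the halving work, because $(1+\eps)k=2k$.

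\textbf{Part (b).} The plan is to reuse the reduction of Lemma~\ref{lem:red}, but with a large \emph{constant} $q=q(\eps)$ in place of $q=1$, chosen so that the gap between the two cases beats the factor $2-\eps$. Since $\ell/\ell'=(14q+7)/(7q+14)=2-\tfrac{21}{7q+14}$, taking $q=\lceil 3/\eps\rceil$ ensures $\ell/\ell'>2-\eps$, i.e.\ $\ell/(2-\eps)>\ell'$. Now assume for contradiction that \ApproxLCS with $z=2-\eps$ over a binary alphabet runs in $\Oh(n^{2-\delta})$ time. Given any constant $c$ and an \OV instance on $N$ vectors of dimension $d=c\log N$, Lemma~\ref{lem:red} (valid for every positive integer $q$, over a binary alphabet) builds strings $T_1,T_2$ of length $n=7d\big((q+1)N+q\big)$ and sets $k=d$. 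If $A$ has an orthogonal pair, then $\ell_k\ge\ell$, so the algorithm outputs a substring of $T_1$ that occurs in $T_2$ with at most $k$ mismatches and has length at least $\ell_k/(2-\eps)\ge\ell/(2-\eps)>\ell'$; if $A$ has no orthogonal pair, then by Lemma~\ref{lem:red}(b) every substring of $T_1$ occurring in $T_2$ with at most $k$ mismatches has length smaller than $\ell'$. Hence comparing the length of the returned substring against $\ell'$ decides \OV.

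For the complexity: since $q$ is a fixed constant depending only on $\eps$ and $d=c\log N$, we have $n=\Oh_{\eps,c}(N\log N)$, so the algorithm solves \OV in time $\Oh(n^{2-\delta})=N^{2-\delta}\cdot(\log N)^{2-\delta}\cdot\Oh_{\eps,c}(1)$. As $(\log N)^{2-\delta}=2^{o(\log N)}$ and $\log N=d/c$ with $c$ fixed, this equals $2^{o(d)}\cdot N^{2-\delta}$. Since the argument works for every constant $c$, Fact~\ref{fct:OVP} (with $\varepsilon=\delta$) shows that SETH is false. The only delicate point is the choice of $q$ together with the verification that $n=N^{1+o(1)}$, so that strongly subquadratic time in $n$ remains strongly subquadratic in $N$ once polylogarithmic and constant factors have been absorbed into the $2^{o(d)}$ slack.
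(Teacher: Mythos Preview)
Your proposal is correct and follows essentially the same approach as the paper. For part~(a) you invoke Theorem~\ref{th:main} with $\eps=1$ just as the paper does; your threshold-based split at $j_0$ is a mild refinement of the paper's simpler pigeonhole observation that one of the two halves has Hamming distance at most $k$, but the idea is the same. For part~(b) you use the gap in Lemma~\ref{lem:red} with a constant $q=q(\eps)$ exactly as the paper does (the paper takes $q=\lceil 3/\eps\rceil-2$, you take $q=\lceil 3/\eps\rceil$; both choices make $\ell/\ell'\ge 2-\eps$ and leave $n=\Oh(Nd)$).
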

\begin{proof}
  (a) The algorithm of Theorem~\ref{th:main} for $\eps=1$ computes a pair of substrings of length at least $\ell_k$ of $T_1$ and $T_2$ that have Hamming distance at most $2k$. Either the first halves or the second halves of the strings have Hamming distance at most $k$.

  (b) We use the gap that exists in Lemma~\ref{lem:red} for $q>1$. Assume that there is such an algorithm for some $\eps$ and $\delta$. We will run it for strings $T_1$ and $T_2$ from that lemma. Let $q = \lceil\frac{3}{\eps}\rceil-2$; then $\ell/\ell' \ge 2-\eps$. If the \OV problem has a solution, by Lemma~\ref{lem:red}\eqref{it:a}, the algorithm produces a longest common substring of length at least $\ell/(2-\eps) \ge \ell'$. Otherwise, by Lemma~\ref{lem:red}\eqref{it:b}, its result has length smaller than $\ell'$.
This concludes that the conjectured approximation algorithm can be used to solve the \OV problem. 
  
The lengths of the selected strings are $n=N(7dq+7d)+7dq=\Oh(Nd)$ for $d=c \log N$. Hence, the running time is $\Oh(n^{2-\delta})=\Oh(N^{2-\delta}d^{\Oh(1)})$, which, by Fact~\ref{fct:OVP}, contradicts SETH.
\end{proof}

\section{\AllkLCS}
\label{sec:AllkLCS}
  The following problem has received a considerable attention in the recent years; see \cite{DBLP:conf/stoc/ChanL15}
  and the references therein.

  \begin{problem}[\BJI]~\label{pr:BJI}
    Construct a data structure over a binary string $S$ of length $n$ that, given positive integers $\ell$ and $q$,
    can compute if there is a substring of $S$ of length $\ell$ containing exactly $q$ ones.
  \end{problem}

  A simple combinatorial argument shows that it suffices to compute the minimal and maximal number of ones
  in a substring of $S$ of length $\ell$, as for every intermediate number of ones a substring of $S$ of this length
  exists as well.
  As a result, the \BJI problem can be solved in linear space and with constant-time queries.
  It turns out that the index can also be constructed in strongly subquadratic time.

  \begin{lemma}[Chan and Lewenstein \cite{DBLP:conf/stoc/ChanL15}]\label{lem:BIJ}
    The index for \BJI of $\Oh(n)$ size and with $\Oh(1)$-time queries can be constructed in $\Oh(n^{1.859})$ expected time or in $\Oh(n^{1.864})$ deterministic time.
  \end{lemma}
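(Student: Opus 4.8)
This lemma is due to Chan and Lewenstein \cite{DBLP:conf/stoc/ChanL15}, so in the paper it suffices to cite them; let me nevertheless sketch the route I would follow. The first step is to reduce the index construction to a single aggregate computation. By the combinatorial observation noted above~--- every count between the minimum and the maximum is realizable~--- it is enough to compute, for each window length $\ell\in\{1,\dots,n\}$, the maximum and the minimum number of ones over all length-$\ell$ substrings of $S$; these two tables occupy $\Oh(n)$ space and answer every query $(\ell,q)$ in $\Oh(1)$ time. Writing $P[j]$ for the number of ones in $S[1,j]$ and letting $Q$ be the reverse-and-complement sequence $Q[j]=P[n]-P[n-j]$, one checks that the maximum count for length $\ell$ equals $P[n]$ minus the value at position $n-\ell$ of the $(\min,+)$-convolution of $P$ and $Q$, and the minimum count is obtained analogously from the $(\max,+)$-convolution. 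Crucially, both $P$ and $Q$ are non-decreasing integer sequences with values in $[0,n]$, so the whole task amounts to one $(\min,+)$-convolution (and one $(\max,+)$-convolution) of \emph{monotone, $\Oh(n)$-bounded} sequences.

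The second step is the truly subquadratic algorithm for that special convolution, which exploits monotonicity through clustering. Cutting the index range into $\Theta(n/t)$ blocks of $t$ consecutive positions, the values of $P$ (and of $Q$) inside one block span an interval of width at most $t$; hence for any ordered pair of blocks the contributions relevant to the convolution lie in an $\Oh(t)\times\Oh(t)$ ``difference box'', i.e.\ the input is clustered into few boxes of small side. Block pairs carrying few points are handled by brute force, and for the remaining dense pairs one would like to batch many boxes into a single FFT over the small coordinate domain; the plain version of this scheme is still $\Theta(n^2)$, so the gain must come from the structure of the set of \emph{popular} contributions.

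The genuinely hard part~--- which is precisely what Chan and Lewenstein supply~--- is the additive-combinatorics input: a Balog--Szemer\'edi--Gowers / Ruzsa-covering argument shows that the frequently occurring index differences (equivalently, the heavy block interactions) can be covered by a small number of structured, arithmetic-progression-like sets, each processable by one batched convolution over an enlarged but still controlled domain; balancing the block length $t$ against the number of heavy interactions then yields the stated exponents. Randomization enters through hashing and sampling inside the covering step (and through randomized convolution), giving the $\Oh(n^{1.859})$ expected bound, whereas replacing these by explicit constructions is what costs the slight loss to the $\Oh(n^{1.864})$ deterministic bound. I expect the additive-combinatorial covering and the ensuing parameter balancing~--- the part that pins down the precise constants~--- to be the main obstacles; for the purposes of this paper, though, Lemma~\ref{lem:BIJ} is simply quoted from \cite{DBLP:conf/stoc/ChanL15}.
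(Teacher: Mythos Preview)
Your proposal is correct and matches the paper's treatment: the paper gives no proof of Lemma~\ref{lem:BIJ} at all, simply citing \cite{DBLP:conf/stoc/ChanL15}, and you rightly identify this. Your accompanying sketch~--- the reduction of the index to min/max tables via $(\min,+)$-convolution of monotone bounded sequences, followed by the block decomposition and the BSG-type covering of heavy interactions~--- is an accurate outline of the Chan--Lewenstein argument and goes well beyond what the paper itself provides.
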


  We use this result to solve the \kLCS problem for all values of $k$ simultaneously.

  \begin{theorem}
    \AllkLCS can be solved in $\Oh(n^{2.859})$ expected time or in $\Oh(n^{2.864})$ deterministic time.
  \end{theorem}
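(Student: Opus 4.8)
The plan is to reduce, for each pair of starting positions in $T_1$ and $T_2$, the computation of how the number of mismatches grows along a common extension to a \BJI query, and then aggregate. Concretely, fix a shift $\Delta$ and align suffix $T_1[i..]$ with suffix $T_2[i+\Delta..]$ (and symmetrically the reverse alignment); the characters being compared form a sequence, and we record a $1$ wherever the two aligned characters differ and a $0$ wherever they agree. This yields, for each of the $\Oh(n)$ relevant diagonals, a binary string $B_\Delta$ of length $\Oh(n)$ whose $i$-th bit is the mismatch indicator at offset $i$ along that diagonal. A length-$\ell$ substring of $T_1$ occurring in $T_2$ with exactly (at most) $k$ mismatches corresponds exactly to a length-$\ell$ window of some $B_\Delta$ containing at most $k$ ones. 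Using the combinatorial observation already noted in the excerpt — that for a fixed window length the attainable numbers of ones form a contiguous range — it suffices, for each $\ell$ and each $\Delta$, to know the \emph{minimum} number of ones over all length-$\ell$ windows of $B_\Delta$; call it $f_\Delta(\ell)$. Then $\ell_k = \max\{\ell : \min_\Delta f_\Delta(\ell) \le k\}$, which, computed for all $k$ at once, is just: for each $\ell$, let $g(\ell) = \min_\Delta f_\Delta(\ell)$, and for each $k$ output $\max\{\ell : g(\ell)\le k\}$. The last step is a single linear scan over $\ell$.

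The key steps, in order: (1) for each of the $\Theta(n)$ diagonals $\Delta$, build the binary mismatch string $B_\Delta$ in $\Oh(n)$ time, and build the Chan--Lewenstein \BJI index on $B_\Delta$ via Lemma~\ref{lem:BIJ} in $\Oh(n^{1.859})$ expected (resp.\ $\Oh(n^{1.864})$ deterministic) time; (2) from that index, extract $f_\Delta(\ell)$ for every length $\ell = 1,\dots,|B_\Delta|$ in $\Oh(n)$ total time (one constant-time query per length, after noting the index gives the min/max count directly); (3) aggregate $g(\ell) = \min_\Delta f_\Delta(\ell)$ over all diagonals in $\Oh(n^2)$ time total; (4) from $g$, read off the answer for every $k$ in $\Oh(n)$ time. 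Summing over the $\Theta(n)$ diagonals, step (1) dominates at $\Oh(n \cdot n^{1.859}) = \Oh(n^{2.859})$ expected time (resp.\ $\Oh(n^{2.864})$), and every other step is $\Oh(n^2)$, so the total matches the claimed bound. One must be slightly careful with bookkeeping: each diagonal has two orientations (a substring of $T_1$ aligned to a substring of $T_2$ is captured by matching $T_1[i..]$ against $T_2[j..]$ for the appropriate $j$), so the relevant index set is pairs $(i,j)$ reduced to the $\Theta(n)$ distinct shifts $j-i$, and for each shift the mismatch string has length $n - |j-i|$; nothing here exceeds the budget.

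I expect the main obstacle to be purely a matter of careful accounting rather than a genuine difficulty: verifying that every length-$\ell$ substring of $T_1$ with at most $k$ mismatches against \emph{some} substring of $T_2$ is witnessed by a length-$\ell$ window on one of the diagonals, and conversely, so that nothing is over- or under-counted when we pass from "exactly $q$ ones in a window" (what \BJI answers) to "at most $k$ mismatches" (what \kLCS needs) — here the contiguous-range observation is exactly what bridges the gap. A secondary point worth stating explicitly is that the \BJI index per diagonal, once built, must be queried only $\Oh(n)$ times total (once per candidate length), so that the query phase is negligible next to the $n \cdot n^{1.859}$ construction cost; and that the deterministic/expected dichotomy is inherited verbatim from Lemma~\ref{lem:BIJ}.

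\begin{proof}
For each shift $\Delta$ with $0 \le |\Delta| < n$, consider the alignment of $T_1$ and $T_2$ in which position $i$ of $T_1$ is compared with position $i+\Delta$ of $T_2$ (for those $i$ where both are defined), and let $B_\Delta$ be the binary string whose consecutive entries are the mismatch indicators of this alignment; thus $|B_\Delta| = n - |\Delta|$. A substring of $T_1$ of length $\ell$ occurs in $T_2$ with exactly $q$ mismatches if and only if some window of length $\ell$ in some $B_\Delta$ contains exactly $q$ ones. By the combinatorial observation underlying the \BJI problem, the set of numbers of ones realised by length-$\ell$ windows of a fixed $B_\Delta$ is an integer interval; hence a length-$\ell$ window with at most $k$ ones exists in $B_\Delta$ if and only if $f_\Delta(\ell) \le k$, where $f_\Delta(\ell)$ is the minimum number of ones over all length-$\ell$ windows of $B_\Delta$. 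Setting $g(\ell) = \min_{\Delta} f_\Delta(\ell)$, the length of the longest common substring with at most $k$ mismatches is $\max\{\ell : g(\ell) \le k\}$.

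The algorithm is as follows. For each of the $\Theta(n)$ shifts $\Delta$: build $B_\Delta$ in $\Oh(n)$ time; construct the Chan--Lewenstein index of Lemma~\ref{lem:BIJ} on $B_\Delta$, and use it to obtain $f_\Delta(\ell)$ for all $\ell \le |B_\Delta|$ with $\Oh(1)$ time per value (the index reports the minimal count directly), for $\Oh(n)$ time total over all lengths. Then compute $g(\ell) = \min_\Delta f_\Delta(\ell)$ for every $\ell \le n$ in $\Oh(n^2)$ total time, and finally, by a single scan over $\ell = n, n-1, \dots, 1$ maintaining a running minimum, output for every $k$ the value $\max\{\ell : g(\ell) \le k\}$ in $\Oh(n)$ time.

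The cost is dominated by $\Theta(n)$ invocations of Lemma~\ref{lem:BIJ}, each on a string of length at most $n$, for a total of $\Oh(n \cdot n^{1.859}) = \Oh(n^{2.859})$ expected time, or $\Oh(n \cdot n^{1.864}) = \Oh(n^{2.864})$ deterministic time; all remaining steps run in $\Oh(n^2)$ time. This proves the theorem.
\end{proof}
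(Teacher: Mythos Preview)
Your proposal is correct and takes essentially the same approach as the paper: the diagonal mismatch strings $B_\Delta$ are exactly the $2n-1$ diagonals of the Boolean mismatch matrix the paper constructs, and both arguments build the Chan--Lewenstein index on each diagonal and take, for every length $\ell$, the minimum number of ones across all diagonals. Your write-up is somewhat more explicit about the final aggregation step (reading off $\ell_k$ from $g(\ell)$ for all $k$), but the idea is identical.
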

  \begin{proof}
    Note that, equivalently, we can compute, for all $\ell=1,\ldots,n$, the minimal Hamming distance between substrings of length $\ell$ in $T_1$ and $T_2$.

    Let $M$ be an $n \times n$ Boolean matrix such that $M[i,j]=0$ if and only if $T_1[i]=T_2[j]$.
    We construct $2n-1$ binary strings corresponding to the diagonals of $M$: the string number $p$,
    for $p \in \{-n,\ldots,n\}$, corresponds to the diagonal $M[i,j]\,:\,j-i=p$.
    For each of the strings, we construct the jumbled index using Lemma~\ref{lem:BIJ}.

    Each diagonal corresponds to one of the possible alignments of $T_1$ and $T_2$.
    In the jumbled index we compute, in particular, for each value of $\ell$
    what is the minimal number of 1s (which correspond to mismatches between
    the corresponding positions in $T_1$ and $T_2$) in a string of length $\ell$.
    To compute the global minimum for a given $\ell$, we only need to take the minimum
    across all the jumbled indexes.

    By Lemma~\ref{lem:BIJ}, all the jumbled indexes can be constructed in $\Oh(n^{2.859})$ expected time
    or in $\Oh(n^{2.864})$ time deterministically.
  \end{proof}

\bibliographystyle{abbrv}
\bibliography{main}
\end{document}